\documentclass[11pt]{article}

\usepackage{amsthm}
\usepackage{graphicx} 
\usepackage{array} 

\usepackage{amsmath, amssymb, amsfonts, verbatim}
\usepackage{hyphenat,epsfig,subfigure,multirow}

\usepackage[usenames,dvipsnames]{xcolor}
\usepackage[ruled]{algorithm2e}

\usepackage{tcolorbox}
\tcbuselibrary{skins,breakable}
\tcbset{enhanced jigsaw}

\usepackage[compact]{titlesec}

\definecolor{DarkRed}{rgb}{0.5,0.1,0.1}
\definecolor{DarkBlue}{rgb}{0.1,0.1,0.5}

\usepackage{hyperref}
\hypersetup{
colorlinks=true,
pdfnewwindow=true,
citecolor=ForestGreen,
linkcolor=DarkRed,
filecolor=DarkRed,
urlcolor=DarkBlue
}

\usepackage{bm}
\usepackage{url}
\usepackage{xspace}
\usepackage[mathscr]{euscript}

\usepackage{mdframed}

\usepackage[noend]{algpseudocode}
\makeatletter
\def\BState{\State\hskip-\ALG@thistlm}
\makeatother

\usepackage{cite}
\usepackage{enumitem}

\usepackage[margin=1in]{geometry}

\newtheorem{theorem}{Theorem}
\newtheorem{lemma}{Lemma}[section]

\newtheorem{claim}[lemma]{Claim}
\newtheorem{fact}[lemma]{Fact}

\newtheorem{definition}{Definition}

\newtheorem{problem}{Problem}
\newtheorem{remark}[lemma]{Remark}

\newtheorem*{claim*}{Claim}
\newtheorem*{proposition*}{Proposition}
\newtheorem*{lemma*}{Lemma}
\newtheorem*{problem*}{Problem}

\newtheorem{mdresult}[theorem]{Theorem}
\newenvironment{Theorem}{\begin{mdframed}[backgroundcolor=lightgray!40,topline=false,rightline=false,leftline=false,bottomline=false,innertopmargin=2pt]\begin{mdresult}}{\end{mdresult}\end{mdframed}}

\newtheorem{mdinvariant}{Invariant}
\newenvironment{invariant}{\begin{mdframed}[hidealllines=false,backgroundcolor=gray!10,innertopmargin=0pt]\begin{mdinvariant}}{\end{mdinvariant}\end{mdframed}}

\allowdisplaybreaks

\renewcommand{\qed}{\nobreak \ifvmode \relax \else
      \ifdim\lastskip<1.5em \hskip-\lastskip
      \hskip1.5em plus0em minus0.5em \fi \nobreak
      \vrule height0.75em width0.5em depth0.25em\fi}



\newcommand{\toShrink}{-.20cm}
\newcommand{\toShrinkEnu}{-.2cm}





\newcommand{\card}[1]{\left\vert{#1}\right\vert}

\newcommand{\set}[1]{\ensuremath{\left\{ #1 \right\}}}

\newcommand{\alg}{\ensuremath{\mathcal{A}}\xspace}

\newcommand{\etal}{{\it et al.\,}}
\newcommand{\goodetal}{{\it et al.\/}}


\newcommand{\FG}{\ensuremath{\mathcal{G}}}


\newenvironment{tbox}{\begin{tcolorbox}[
		enlarge top by=5pt,
		enlarge bottom by=5pt,
		 breakable,
		 boxsep=0pt,
                  left=4pt,
                  right=4pt,
                  top=10pt,
                  arc=0pt,
                  boxrule=1pt,toprule=1pt,
                  colback=white
                  ]
	}
{\end{tcolorbox}}



\newcommand{\textbox}[2]{
{
\begin{tbox}
\textbf{#1}
{#2}
\end{tbox}
}
}






\newcommand{\CONGEST}{\ensuremath{\mathcal{CONGEST}}\xspace}
\newcommand{\mis}{\ensuremath{\mathcal{M}}}


\title{Fully Dynamic Maximal Independent Set\\ with Sublinear Update Time}
\author{Sepehr Assadi\thanks{{\small{\tt sassadi@cis.upenn.com}}\ Supported in part by NSF grant CCF-1617851. } \\ University of Pennsylvania
\and Krzysztof Onak\thanks{{\small{\tt konak@us.ibm.com}}} \\ IBM Research
\and Baruch Schieber\thanks {{\small{\tt sbar@us.ibm.com}}} \\ IBM Research
\and Shay Solomon\thanks{{\small{\tt solo.shay@gmail.com}}} \\ IBM Research
}

\date{}

\begin{document}
\maketitle

\thispagestyle{empty}
\begin{abstract}
A maximal independent set (MIS) can be maintained in an evolving $m$-edge graph
by simply recomputing it from scratch in $O(m)$ time after each update.
But can it be maintained in time sublinear in $m$ in fully dynamic graphs?

\smallskip

We answer this fundamental open question in the affirmative. We present a \emph{deterministic} algorithm with amortized update time $O(\min\{\Delta,m^{3/4}\})$, where $\Delta$ is a fixed bound on the maximum degree in the graph and $m$ is the (dynamically changing) number of edges.  

\smallskip

We further present a distributed implementation of our algorithm with $O(\min\{\Delta,m^{3/4}\})$ amortized message complexity, and $O(1)$ amortized round complexity and adjustment complexity (the number of vertices
that change their output after each update). This strengthens a similar result by Censor-Hillel, Haramaty, and Karnin (PODC'16) that required an assumption of a non-adaptive oblivious adversary. 

\end{abstract}

\setcounter{page}{0}

\newpage

\section{Introduction}\label{sec:intro}

\emph{Dynamic graph algorithms} constitute an active area of research in theoretical computer science.
Their objective is to maintain a solution to a combinatorial problem in an input graph---for example, a minimum spanning tree or maximal matching---under insertion and deletion of edges.
The research on dynamic graph algorithms addresses the natural question of whether one essentially needs to recompute the solution from scratch after every update.

This question has been asked over the years for a wide range of problems such as connectivity~\cite{HK99,HLT01}, minimum spanning tree~\cite{Frederickson85,EGIN97,HK97,HLT01,Wulff-N17},
maximal matching~\cite{IL93,BGS11,NS13,Sol16}, approximate matching and vertex cover~\cite{IL93,OR10,NS13,GP13,BS15,BHI15,BS16,BHN16,BHN17,BCH17}, shortest
paths~\cite{ES81,King99,DI04,Thorup05,RZ11,BR11,HKN14a,HKN14b,HKNS15,BC16,ACK17}, and graph coloring~\cite{BM17,BCKLRRV17,BCHN18}
(this is by no means a comprehensive summary of previous results).

Surprisingly however, almost no work has been done for the prominent problem of maintaining a \emph{maximal independent set (MIS)} in dynamic graphs. Indeed, the only
previous result for this problem that we are aware of is due to Censor-Hillel \goodetal~\cite{CHK16}, who developed a \emph{randomized} algorithm for this problem in \emph{distributed} dynamic networks and left the \emph{sequential} case (the main focus of this paper)
as a major open question. We note that implementing their distributed algorithm in the sequential setting requires $\Omega(\Delta)$\footnote{\label{footnote:potential} It is \emph{not} clear whether $O(\Delta)$ time
is also sufficient for this algorithm or not; see Section~6 of their paper.} update time in \emph{expectation}, where $\Delta$ is a fixed upper bound on the degree of vertices in the graph and
can be as large as $\Theta(m)$ in sparse graphs.

The maximal independent set problem is of fundamental importance in graph theory
with natural  connections to a plethora of other basic problems, such as vertex
cover, matching, vertex coloring, and edge coloring (in fact, all these problems
can be solved approximately by finding an MIS, see, e.g., the paper of
Linial~\cite{Linial87}).  As a result, this problem has been studied extensively
in different settings, in particular in parallel and distributed
algorithms~\cite{Cook83,KW85,ABI86,Luby86,Linial87,PS96,BEK14,BEPS16,Ghaffari16}.
(We refer the interested reader to the papers of Barenboim
\goodetal~\cite{BEPS16} and Ghaffari~\cite{Ghaffari16} for the story of this
problem in these settings and a comprehensive summary of previous work.)

In this paper, we concentrate on sequential algorithms for maintaining a maximal independent set in a dynamic graph. Our results are also applicable to the dynamic distributed setting
and improve upon the previous work of Censor-Hillel~\goodetal~\cite{CHK16}.

\subsection{Problem Statement and Our Results}\label{sec:results}

Recall that a maximal independent set (MIS) of an undirected graph, is a \emph{maximal} collection of vertices subject to the restriction that no pair of vertices in the collection are \emph{adjacent}. In the maximal independent set problem,
the goal is to compute an MIS of the input graph.

We study the \emph{fully dynamic} variant of the maximal independent set problem in which the goal is to maintain an MIS of a dynamic graph $G$, denoted by $\mis := \mis(G)$, subject to a sequence of edge
insertions and deletions. When an edge change occurs, the goal is to maintain $\mis$ in time significantly faster than simply recomputing it from scratch. Our main result is the following:
\vspace{-3pt}
\begin{Theorem}\label{thm:dynamic-m}
Starting from an empty graph on $n$ fixed vertices, a maximal independent set can be maintained {deterministically} over any sequence of edge insertions and deletions in $O({m^{3/4}})$ {amortized update time},
where $m$ denotes the dynamic number of edges.
\end{Theorem}
\vspace{-3pt}


As a warm-up to our main result in Theorem~\ref{thm:dynamic-m}, we also present an extremely simple \emph{deterministic} algorithm for maintaining an MIS with $O(\Delta)$ amortized update time, where $\Delta$ is a fixed
upper bound on the maximum degree of the graph. Our algorithms can be combined together to achieve a \emph{deterministic $O(\min\set{\Delta,m^{3/4}})$ amortized update time algorithm for maintaining an MIS in dynamic graphs}.
This constitutes the first improvement on the update time required for this problem in fully dynamic graphs over the na\"{\i}ve $O(m)$ bound for all possible values of $m$.  We now elaborate more on the details of our algorithm in Theorem~\ref{thm:dynamic-m}.


\paragraph{Deterministic Algorithm.} An important feature of our algorithm in Theorem~\ref{thm:dynamic-m} is that it is deterministic. The distinction between deterministic and randomized algorithms
is particularly important in the dynamic setting as almost all existing randomized algorithms require the assumption of a \emph{non-adaptive oblivious adversary} who is not allowed to learn anything about the algorithm's random bits. Alternately, this setting can be
viewed as the requirement that the entire sequence of updates be fixed in advance, in which case the adversary cannot use the solution maintained by the algorithm in order to break its guarantees. While these assumptions can be naturally justified in many
settings, they can render randomized algorithms entirely unusable in certain scenarios (see, e.g.,~\cite{BC16,BS15,BHN16} for more details).

As a result of this assumption, obtaining a deterministic algorithm for most dynamic problems is considered a distinctively harder task compared to finding a randomized one.
This is evident by the polynomial gap between the update time of best known deterministic algorithms compared to randomized ones for many dynamic problems.
For example, a maximal matching can be maintained in a fully dynamic graph with $O(1)$ update
time via a randomized algorithm~\cite{Sol16}, assuming a non-adaptive oblivious adversary, while the best known deterministic algorithm for this problem requires $\Theta(\sqrt{m})$ update time~\cite{NS13} (see~\cite{BCHN18} for a similar
situation for $(\Delta+1)$-coloring of vertices of a graph).

\paragraph{${O(1)}$-Amortized Adjustment Complexity.} An important performance measure of a dynamic algorithm is its \emph{adjustment complexity} (sometimes called \emph{recourse}) that counts the number of vertices (or edges) that
need to be inserted to or deleted from the maintained solution after each update (see,~e.g.~\cite{CHK16,BCKLRRV17,GKKP17,BCHN18}).
For many natural graph problems such as maintaining a maximal matching, constant worst-case adjustment complexity
can be trivially achieved since one edge update cannot ever necessitate more than a constant number of changes in the maintained solution. This is, however, \emph{not} the case for the MIS problem: by inserting an edge
between two vertices already in $\mis$, the adversary can force the algorithm to delete at least one end point of this edge from $\mis$, which in turn forces the algorithm to pick \emph{all} neighbors of this deleted
vertex to ensure maximality (this phenomena also highlights a major challenge in the treatment of this problem compared to the maximal matching problem which we discuss further below). 

Nevertheless, we prove that the adjustment complexity of our algorithm in Theorem~\ref{thm:dynamic-m} is $O(1)$ on average which is clearly optimal. Can we further improve
our results to achieve an $O(1)$ \emph{worst-case} adjustment complexity? We claim that this is indeed not possible by showing that the worst-case adjustment complexity of any algorithm for maintaining an MIS is $\Omega(n)$, using
a simple adaption of an example proposed originally by~\cite{CHK16} for proving a similar result in distributed settings when vertex deletions are also allowed by the adversary (this follows seamlessly from the result in~\cite{CHK16} and is provided in
Appendix~\ref{app:worst-case-example} only for completeness).

\paragraph{Distributed Implementation.} Finding a maximal independent set is one of the most studied problems in distributed computing. In the distributed computing model, there is a processor on each vertex of the graph.
Computation proceeds in synchronous rounds during which every processor can communicate messages of size $O(\log n)$ with its neighbors (this corresponds to the
\CONGEST~model of distributed computation; see Section~\ref{app:distributed} for further details). In the dynamic setting, both edges and vertices can be inserted to or deleted from the graph and the goal is
to update the solution in a small number of rounds of communication, with small communication cost and adjustment complexity.

Our results in the sequential setting also imply a \emph{deterministic distributed algorithm for maintaining an MIS in a dynamic network
with $O(1)$ amortized round complexity, $O(1)$ amortized adjustment complexity, and $O(\min\set{\Delta,m^{3/4}})$ amortized message complexity per each update.}
This result achieves an improved message complexity compared to the distributed algorithm of~\cite{CHK16} with asymptotically the same round and adjustment complexity (albeit in amortized sense as opposed
to in expectation; see Section~\ref{app:distributed}).
More importantly, our result is achieved
via a \emph{deterministic} algorithm and does not require the assumption of a non-adaptive oblivious adversary.
Similar to~\cite{CHK16}, our algorithm can also be implemented in the asynchronous model, where there is no global synchronization of communication between nodes.
We elaborate more on this result in Section~\ref{app:distributed}.

\paragraph{Maximal Independent Set vs.\ Maximal Matching.}
We conclude this section by comparing the maximal independent set problem to the closely related problem of maintaining a maximal matching\footnote{A maximal matching in a graph $G$ can be obtained by computing an
MIS of the line graph of $G$.} in dynamic graphs. We discuss additional challenges that one encounters for the maximal independent set problem.

In sharp contrast to the maximal independent set problem, maintaining maximal matchings in dynamic graphs  has been studied extensively, culminating in an $O(\sqrt{m})$ worst-case update time  deterministic algorithms~\cite{NS13} and $O(1)$ expected update time randomized algorithm~\cite{Sol16} (assuming a non-adaptive oblivious adversary).

Maintaining an MIS in a dynamic graph seems inherently more complicated than maintaining a maximal matching. One simple reason is that as argued before, a single update can only change the status of $O(1)$ edges/vertices in the
maximal matching, while any algorithm can be forced to make $\Omega(n)$ changes to the MIS for a single edge update in the worst case. As a result, a maximal matching can be maintained with an $O(\Delta)$ worst-case update time
via a straightforward algorithm (see,~e.g.~\cite{IL93,OR10}), while the analogous approach for MIS only results in $O(m)$ update time.

Another, perhaps more fundamental difference between the two problems lies in their different level of ``locality.'' To adjust a maximal matching after
an update, we only need to consider the neighbors of the recently unmatched vertices (to find another unmatched vertex to match with), while to fix an MIS, we need to consider the two-hop neighborhood of a recently
removed vertex from the MIS (to add to the MIS the neighbors of this vertex which themselves do not have another neighbor in the MIS). We note that this difficulty is similar-in-spirit to the barrier for maintaining
a \emph{better than $2$} approximate matching via \emph{excluding length-$3$ augmenting paths} in dynamic graphs. Currently, the best known algorithm for achieving a better than $2$-approximation to matching in dynamic graphs
requires $O(m^{1/4})$ update time~\cite{BS15,BS16}. Achieving sub-polynomial in $m$ update time---even using randomness and assuming a non-adaptive oblivious adversary---remains
a major open problem in this area (we refer the interested reader to~\cite{BHN16} for more details).

We emphasize that even for the seemingly easier problem of maximal matching, the best upper bound on update time using a deterministic algorithm (the focus of our paper) is only $O(\sqrt{m})$~\cite{NS13}.

\subsection{Overview of Our Techniques}\label{sec:techniques}

\paragraph{$O(\Delta)$-Amortized Update Time.} Consider the following simple algorithm for maintaining an MIS $\mis$ of a dynamic graph: for each vertex, maintain the number of its neighbors in $\mis$ in a counter, and
for each update in the graph or $\mis$, spend $O(\Delta)$ time to update this counter for the neighbors of the
updated vertex. What is the complexity of this algorithm? Unfortunately, as argued before, an update to the graph may inevitably result in an update of size $\Omega(n)$ to $\mis$. Processing it may take $\Omega(n \cdot \Delta)$ time
as we have to update all neighbors of every updated vertex. However, all we need to handle this case is the following basic observation: while a single update can force the algorithm
to insert up to $\Omega(n)$ vertices to $\mis$, it can never force the algorithm to remove more than one vertex from $\mis$. We therefore charge the $O(\Delta)$ time needed to insert
a vertex into $\mis$ (and there can be many such vertices per one update) to the time spent in a previous update in which the same vertex was (the only one) removed from $\mis$. This allows us to argue that on \emph{average}, we
only spend $O(\Delta)$ time per update.

\paragraph{$O(m^{3/4})$-Amortized Update Time.}
Achieving an $o(\Delta)$ amortized update time however is distinctly more challenging. On the one hand, we cannot afford to update all neighbors of a vertex after every change in the graph.
On the other hand, we do not have enough time to iterate over all neighbors of an updated vertex to even check whether or not they should be added to $\mis$ and hence need to
maintain this information, which is a function of vertices in the two-hop neighborhood of a vertex, explicitly for every vertex.

To bypass these challenges, we relax our requirement for
knowing the status of \emph{all} vertices in the neighborhood of a vertex, and instead maintain the status of some vertices that are in the two-hop neighborhood of a vertex. More concretely,
we allow ``high'' degree vertices to \emph{not} update their ``low'' degree neighbors about their status (as the number of low degree neighbors can be very large), while making every ``low'' degree vertex update not only all its neighbors but even some of its
neighbor's neighbors, using the extra time available to this vertex (as its degree is small). This approach allows us to maintain
a ``noisy'' version of the information described above. Note that this information is not completely accurate as the status of some vertices in $\mis$ would be unknown to their neighbors and their neighbor's neighbors (%
in the actual algorithm, we use a more fine-grained partition of vertices based on their degree into more than two classes, not only ``high'' and ``low'').

We now need to address a new challenge introduced by working with this ``noisy'' information:  we may decide that a vertex is ready to join $\mis$ based on the information stored in the algorithm and insert this vertex to $\mis$, only to find out
that there are already some vertices in $\mis$ adjacent to this vertex. To handle this, we also relax the property of the basic algorithm above that only allowed for deleting one vertex from $\mis$ per each update.
This allows us to insert multiple vertices to $\mis$ as long as a large portion (but not all) of their neighbors are known to be not in $\mis$. Then we go back and delete a small number of ``violating'' vertices
from $\mis$ to make sure it is indeed an independent set. Note that deleting those vertices may now require inserting a new set vertices in their neighborhood to $\mis$ to ensure maximality.

In order to be able to perform all those operations and recursively treat the newly deleted vertices in a timely manner, we maintain the invariant that whenever we need to remove more than one vertex from $\mis$, the number of inserted vertices
leading to this case is much larger than the number of removed vertices. This allows us to extend the simple charging scheme used in the analysis of the basic algorithm above to this new algorithm and prove our upper bound on the amortized update time
of the algorithm.

We point out that despite the multiple challenges along the way that are described above, our algorithm turned to be quite simple in hindsight. The main delicate matters are in the choice of parameters and in the analysis. This in turn makes the implementation of our results in sequential and distributed settings quite practical.

\paragraph{Organization.} We introduce our notation and preliminaries in Section~\ref{sec:prelim}. We then present a simple proof of the $O(\Delta)$-amortized update time algorithm in Section~\ref{app:dynamic-delta} as a warm-up to
our main result. Section~\ref{sec:dynamic-m} contains the proof of our main result in Theorem~\ref{thm:dynamic-m}. The distributed implementation of our result
and a detailed comparison of our results with that of Censor-Hillel~\goodetal~\cite{CHK16} appear in Section~\ref{app:distributed}.


\section{Preliminaries}\label{sec:prelim}

\paragraph{Notation.} We denote the static vertex set of the input graph by $V$. Let $\FG = \langle{G_0,G_1,\ldots \rangle}$ be the sequence of graphs that are given to the algorithm: 
initial graph $G_0$ is empty and each graph $G_t$ is obtained from the previous graph $G_{t-1}$ by either inserting or deleting a single edge $e_t = (u_t,v_t)$. We use $G_t(V,E_t)$ to denote
the graph at step $t$ and define $m_t := \card{E_t}$. Finally, throughout the paper, $\mis$ denotes the maximal independent set maintained by the algorithm at every time step. 


\paragraph{Greedy MIS Algorithm.} Consider the following algorithm for computing an MIS of a given graph: Fix an arbitrary ordering of the vertices in the graph, add the first vertex to the MIS, remove all its neighbors from the list, and continue. 
This algorithm clearly computes an MIS of the input graph. In the rest of the paper, we refer to this algorithm as \emph{the greedy MIS algorithm}. 
\vspace{-5pt}
\begin{fact}\label{fact:greedy-mis}
	For an $n$-vertex graph $G$ with maximum degree $\Delta$, the greedy MIS algorithm computes an MIS of size at least $n/(\Delta+1)$. 
\end{fact}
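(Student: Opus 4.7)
The plan is to bound the number of vertices ``consumed'' per iteration of the greedy algorithm and use this to lower bound the size of the produced MIS. Concretely, I would track the candidate list of vertices that are still eligible to be added to the MIS, starting with all $n$ vertices.

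First, I would observe that whenever the greedy algorithm picks the current first vertex $v$ and adds it to the MIS, it subsequently removes $v$ itself together with all of its neighbors in the input graph from the candidate list. Since $v$ has at most $\Delta$ neighbors (by the maximum degree assumption), the total number of vertices removed from the candidate list in this iteration is at most $1 + \Delta = \Delta + 1$.

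Next, I would note that the algorithm terminates exactly when the candidate list becomes empty, at which point every one of the $n$ vertices has been removed in some iteration. Letting $k$ denote the size of the MIS returned (i.e., the number of iterations), the previous step gives $n \leq k \cdot (\Delta + 1)$, so $k \geq n/(\Delta + 1)$, which is the desired bound. Maximality is immediate from the construction, since the algorithm only stops when no vertex remains that is non-adjacent to the current MIS.

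There is no real obstacle here; the only subtle point is making sure that the $\Delta + 1$ counting is tight even in the last iteration (where some neighbors of $v$ may have already been removed, which only helps the bound), and that the ``first vertex'' in the fixed ordering is well-defined at each step. Both are immediate from the description of the greedy procedure.
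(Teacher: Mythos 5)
Your argument is correct and is the standard one; the paper states this fact without proof, treating it as folklore. Each vertex added to the MIS removes itself and at most $\Delta$ neighbors from the pool of $n$ candidates, so the MIS has size at least $n/(\Delta+1)$, exactly as you argue.
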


\newcommand{\MISList}[1]{\ensuremath{\textnormal{\texttt{LISSST}[$#1$]}}\xspace}
\newcommand{\MISCounter}[1]{\ensuremath{\textnormal{\texttt{MISCounter}[$#1$]}}\xspace}

\section{Warm-Up: A Simple $O(\Delta)$-Update-Time Dynamic Algorithm}\label{app:dynamic-delta}

As a warm-up to our main result, we describe a straightforward algorithm for maintaining an MIS $\mis$ in a dynamic graph with
$O(\Delta)$ amortized update time, where $\Delta$ is a fixed upper bound on the maximum degree in the graph.  For every vertex $v$ in the graph, we simply maintain a counter $\MISCounter{v}$, counting number of its neighbors in $\mis$.
In the following, we consider updating $\mis$ and this counter after each edge update.

Let $e_t=(u_t,v_t)$ be the updated edge. Suppose first that we delete this edge. In this case, $u_t$ and $v_t$ cannot both be in $\mis$ by definition of an independent set. Also, if none of them belong to $\mis$, there is nothing to do.
The interesting case is thus when exactly one of $u_t$ or $v_t$ belongs to $\mis$; without loss of generality, we assume this vertex is $u_t$. We first subtract one from $\MISCounter{v_t}$ (as it is no longer adjacent to $u_t$). If $\MISCounter{v_t} > 0$ still,
it means that $v_t$ is adjacent to some vertex in $\mis$ and hence we are done.  Otherwise, we add $v_t$ to $\mis$ and update the counter of all its neighbors in $O(\Delta)$ time.
Clearly, this step takes $O(\Delta)$ time in the worst case, after that $\mis$ is indeed an MIS.

Now suppose $e_t$ was inserted to the graph. The only interesting case here is when both $u_t$ and $v_t$ belong to $\mis$ (we do not need to do anything in the remaining cases, other than perhaps updating the neighbor list of $u_t$ and $v_t$ in $O(1)$ time).
To ensure that $\mis$ remains an independent set, we need to remove one of these vertices, say $u_t$, from $\mis$. After this, to ensure the maximality, we have to insert to $\mis$ any
neighbor of $u_t$ that can now join $\mis$. To do this, we first update the $\MISCounter{\cdot}$ of all neighbors of $u_t$ in $O(\Delta)$ time. Next (using the updated counter), we iterate over all neighbors of $u_t$ and for each one check if they can be inserted to $\mis$ now or not. 
If so, we add this new vertex to $\mis$ and inform all its neighbors in $O(\Delta)$ time to update their $\MISCounter{\cdot}$.
It is easy to see that in this case, we spend $O(k \cdot \Delta)$ time in the worst case, where $k$ is the number of vertices added to $\mis$. 

The correctness of this algorithm is straightforward to verify. We now prove that the amortized running time of the algorithm is $O(\Delta)$. The crucial observation is that whenever we change $\mis$, we may increase its size without any restriction,
but we never decrease its size by more than one. We use the following straightforward charging scheme.

Initially, we start with all vertices being in $\mis$ as the original graph is empty. Whenever we delete one vertex from $\mis$, we spend $O(\Delta)$ time to handle this vertex (including updating its neighbors and checking which ones can
join $\mis$), and place $O(\Delta)$ ``extra budget'' on this vertex to be spent later. We use this budget when this vertex is being inserted to $\mis$ again. Whenever we want to bring this vertex back to $\mis$, we only need to spend
this extra budget and hence the $O(\Delta)$ time spent for inserting this vertex back to $\mis$ can be charged to the time spent for this vertex when we removed it from $\mis$.
This implies that the update time is $O(\Delta)$ in average. We can therefore conclude the following lemma.

\begin{lemma}\label{lem:dynamic-delta}
	Starting from an empty graph on $n$ vertices, a maximal independent set can be maintained \emph{deterministically} over any sequence of $K$ edge insertions and deletions in $O(K \cdot \Delta)$ time
	where $\Delta$ is a fixed bound on the maximum degree in the graph.
\end{lemma}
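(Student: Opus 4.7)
The plan is to prove Lemma~\ref{lem:dynamic-delta} by formalizing the informal description that precedes it: specify the data structures, verify that the stated per-update procedure preserves the MIS invariants, and then bound the total running time by an amortized token argument. Concretely, I would maintain, for every vertex $v$, the membership bit $[v \in \mis]$ and the counter $\MISCounter{v}$ recording the number of neighbors of $v$ currently in $\mis$; adjacency lists are assumed available with $O(1)$ access per neighbor. The procedure to verify is the one already sketched in the warm-up: on deleting $(u,v)$, only the case $u \in \mis,\ v \notin \mis$ is nontrivial, and we decrement $\MISCounter{v}$ and, if it hits zero, insert $v$ into $\mis$ and refresh the counters of $v$'s neighbors; on inserting $(u,v)$, only the case $u,v \in \mis$ is nontrivial, and we remove (say) $u$ from $\mis$, decrement the counters of its neighbors, and then rescan those neighbors, greedily adding to $\mis$ any whose counter is now zero and refreshing the counters of their neighbors.

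Correctness would then reduce to a straightforward inductive invariant check on the update sequence: (i) $\mis$ is independent, (ii) $\mis$ is maximal, and (iii) every $\MISCounter{v}$ equals $|N(v) \cap \mis|$ under the current graph and $\mis$. Invariant (i) is preserved because the only way a new independence violation can be created is by inserting an edge between two MIS vertices, and the procedure immediately evicts an endpoint; invariant (ii) is preserved because the only way a vertex $v \notin \mis$ can become eligible (i.e., have $\MISCounter{v}$ drop to zero) is through an edge deletion affecting its last MIS-neighbor, or through a forced removal of its last MIS-neighbor due to an insertion, and both cases trigger the greedy re-scan. Invariant (iii) is maintained because every time the MIS status of a vertex $w$ changes, the procedure explicitly visits all $O(\Delta)$ of $w$'s neighbors and updates their counters accordingly.

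The main step, and the one I expect to require the most care, is the amortized running time bound. The natural charging scheme is to place an $O(\Delta)$ token on a vertex each time it is removed from $\mis$ and use that token to pay for the $O(\Delta)$ cost of its eventual re-insertion. The key structural observation is that any single edge update can directly force at most one vertex to leave $\mis$ (the endpoint chosen when both endpoints of an inserted edge lie in $\mis$), whereas it may trigger many insertions into $\mis$; however, every such insertion concerns a vertex that must have previously been removed from $\mis$, since the initial graph is empty and so every vertex begins in $\mis$ and needs no prior work. Hence the re-insertion cost is always covered by the token banked at that earlier removal, and removals and re-insertions of a given vertex alternate strictly in time, so there is no double-counting. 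Summing over $K$ updates: at most $K$ tokens are created, each carrying $O(\Delta)$ amortized work, and the $O(\Delta)$ direct per-update work (the single removal's neighbor scan and the counter updates along the modified edge) adds another $O(K \cdot \Delta)$ term, yielding the claimed $O(K \cdot \Delta)$ total. The only subtlety to write out carefully is the bookkeeping that every unit of work charged to an update---whether it scans the neighbors of the removed endpoint or those of a freshly inserted vertex---is accounted for exactly once in this scheme.
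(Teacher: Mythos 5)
Your proposal is correct and follows essentially the same route as the paper: the same counter-based data structure and update procedure, the same observation that each update removes at most one vertex from $\mis$ while every vertex starts in $\mis$, and the same token/budget charging scheme that pays for each re-insertion out of the $O(\Delta)$ banked at that vertex's most recent removal.
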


\newcommand{\Deg}[1]{\ensuremath{\textnormal{\texttt{degree}[$#1$]}}\xspace}
\newcommand{\Nei}[1]{\ensuremath{\textnormal{\texttt{neighbors}[$#1$]}}\xspace}
\newcommand{\NeiDeg}[1]{\ensuremath{\textnormal{\texttt{neighbors-degree}[$#1$]}}\xspace}
\newcommand{\MISFlag}[1]{\ensuremath{\textnormal{\texttt{MIS-flag}[$#1$]}}\xspace}
\newcommand{\MISNei}[1]{\ensuremath{\textnormal{\texttt{MIS-neighbors}[$#1$]}}\xspace}
\newcommand{\MIStwoHopNei}[1]{\ensuremath{\textnormal{\texttt{MIS-2hop-neighbors}[$#1$]}}\xspace}

\newcommand{\vmis}{\ensuremath{V_{\textnormal{\textsf{MIS}}}}}

\newcommand{\vh}{\ensuremath{V_{\textnormal{\textsf{High}}}}}
\newcommand{\vmh}{\ensuremath{V_{\textnormal{\textsf{Med-High}}}}}
\newcommand{\vml}{\ensuremath{V_{\textnormal{\textsf{Med-Low}}}}}
\newcommand{\vl}{\ensuremath{V_{\textnormal{\textsf{Low}}}}}

\newcommand{\UpdateNeighbors}{\ensuremath{\textnormal{\textsf{UpdateNeighbors}}}\xspace}
\newcommand{\UpdateTwoHopNeighbors}{\ensuremath{\textnormal{\textsf{UpdateTwoHopNeighbors}}}\xspace}

\newcommand{\UpdateCrossing}{\ensuremath{\textnormal{\textsf{UpdateCrossingCase}}}\xspace}
\newcommand{\UpdateMISInsertion}{\ensuremath{\textnormal{\textsf{UpdateMISInsertionCase}}}\xspace}

\newcommand{\Lmis}{\ensuremath{L_{\textnormal{\textsf{MIS}}}}}
\newcommand{\Lonehop}{\ensuremath{L_{\textnormal{\textsf{1-hop}}}}}
\newcommand{\Ltwohop}{\ensuremath{L_{\textnormal{\textsf{2-hop}}}}}

\newcommand{\ellmis}{\ensuremath{\ell_{\textnormal{\textsf{MIS}}}}}
\newcommand{\ellonehop}{\ensuremath{\ell_{\textnormal{\textsf{1-hop}}}}}
\newcommand{\elltwohop}{\ensuremath{\ell_{\textnormal{\textsf{2-hop}}}}}

\newcommand{\tk}{\ensuremath{\widetilde{k}}}

\section{An $O(m^{3/4})$-Update-Time Dynamic Algorithm}\label{sec:dynamic-m}

We present our fully dynamic algorithm for maintaining a maximal independent set in this section and prove Theorem~\ref{thm:dynamic-m}.
The following lemma is a somewhat weaker looking version of Theorem~\ref{thm:dynamic-m}. However, we prove next that this lemma is all we need to prove Theorem~\ref{thm:dynamic-m}.

\begin{lemma}\label{lem:dynamic-m}
	Starting with any arbitrary graph on $n$ vertices and $m$ edges, a maximal independent set $\mis$ can be maintained \emph{deterministically} over any sequence of $K = \Omega(m)$ edge insertions and deletions
	in $O(K \cdot m^{3/4})$ time, as long as the number of edges
	remains within a factor $2$ of $m$.
\end{lemma}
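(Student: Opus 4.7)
The plan is to extend the counter-based algorithm of Section~\ref{app:dynamic-delta} by refusing to maintain fully accurate counts of $\mis$-neighbors at every vertex. Instead I would partition the vertices into a small number of classes based on a degree threshold $\tau$ (to be tuned at the end of the analysis; possibly several geometrically spaced thresholds in the final construction, as hinted in Section~\ref{sec:techniques}). A low-degree vertex would propagate every change of its $\mis$-status to \emph{all} of its neighbors \emph{and also} to some of its $2$-hop neighbors, which is affordable because its degree is small. A high-degree vertex would propagate changes only to a carefully selected subset of its neighbors, so that each low-degree vertex may be unaware of some high-degree $\mis$ neighbors it has. The invariant maintained is that the counter stored at $v$ is an undercount of the true number of $\mis$-neighbors of $v$, equal to the number of which $v$ has been explicitly notified.

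\textbf{Update handling.} On an edge update $e_t=(u_t,v_t)$ I would first adjust degrees and the class membership of $u_t,v_t$; class re-promotions are charged to the $\Omega(\tau)$ updates at the vertex that caused them to occur. The MIS is then repaired in two phases. Phase~1 mirrors the warm-up: whenever a noisy counter drops to zero, the corresponding vertex is tentatively inserted into $\mis$. Because the counters undercount, these insertions may violate independence; Phase~2 therefore scans the new entrants, detects conflicts with high-degree $\mis$-vertices (the only kind the counters may miss), removes one endpoint of each conflict, and recurses on the neighbors of any removed vertex. The structural property I need to maintain through the recursion is that in every repair cycle the number of vertices \emph{added} to $\mis$ strictly dominates the number \emph{removed}, so that --- exactly as in the warm-up's charging scheme --- all the work can be paid for by tokens placed on inserted vertices.

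\textbf{Amortized accounting.} I would deposit $O(m^{3/4})$ tokens on every vertex $v$ that enters $\mis$, charged to the triggering update. These tokens must simultaneously cover (i)~the cost of notifying $v$'s one- and two-hop neighborhood on entry and on exit, (ii)~the cost of detecting and resolving conflicts between $v$ and each high-degree $\mis$-vertex adjacent to $v$, and (iii)~the cost of recursively rebuilding maximality after $v$ is removed. The threshold $\tau$ would be chosen to balance the propagation cost of a low-degree vertex against the conflict-resolution cost driven by the number of high-degree vertices, yielding a per-insertion cost of $O(m^{3/4})$ and, by Fact~\ref{fact:greedy-mis} applied to the low-degree class (which controls how far a cascade of conflicts can propagate), an $O(1)$-insertions-per-update bound in amortization. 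The hypothesis $K=\Omega(m)$ absorbs the one-time $O(m)$ cost of building all counters at the start of the sequence, and the factor-$2$ stability of $m$ keeps the single threshold $\tau$ valid throughout.

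\textbf{Main obstacle.} The crux of the argument is maintaining the insertions-exceed-removals invariant across repair cycles. A naive implementation could oscillate: an insertion triggers removal of a high-degree $\mis$-vertex, whose removal forces many further insertions, each of which may itself conflict, and so on. Controlling this cascade --- so that every cycle produces a large enough batch of new $\mis$ members to absorb its own token cost, while keeping each vertex's work inside the budget of its degree class --- is where the detailed algorithmic choices have to be fused with the amortized analysis, and is the step I expect to demand the most care.
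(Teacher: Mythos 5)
Your proposal correctly identifies the paper's strategy: degree-based classes, noisy undercounting counters propagated one and two hops, tentative insertion followed by conflict resolution against high-degree $\mis$-vertices, and a token-based charging scheme. But it stops precisely at the step it flags as ``demanding the most care,'' and that step is where the proof actually lives, so there is a genuine gap. Moreover, the invariant you state — that in every repair cycle insertions ``strictly dominate'' removals — is not the invariant the amortization needs. If $k$ vertices are removed and only $k+1$ inserted, the $k+1$ budgets of size $O(m^{3/4})$ must pay for the $k+1$ insertions, the $k$ removals, \emph{and} a fresh $O(m^{3/4})$ budget on each of the $k$ removed vertices; this requires $C \gtrsim (k+1)c_1 + kc_2$ for a single constant $C$, which fails as $k$ grows. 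You need a \emph{multiplicative} gap: the paper's Invariant~\ref{inv:core} insists that whenever two or more vertices leave $\mis$, at least twice as many join, and that constant-factor slack is what makes the charging close.

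The second missing piece is exactly how the factor-$2$ gap is guaranteed. The paper uses not one threshold but four classes at $m^{1/4}, m^{1/2}, m^{3/4}$, with two counters that filter differently: for $v \in \vl$, $\MISNei{v}$ omits only $\vh$-neighbors, while $\MIStwoHopNei{\cdot}$ additionally omits $\vmh$. After deleting $u_t$, the algorithm branches on the sizes $\ellonehop$ and $\elltwohop$. When $\ellonehop > 4m^{1/2}$, greedily inserting from $\Lonehop$ (all of degree $< m^{1/4}$) yields by Fact~\ref{fact:greedy-mis} at least $\ellonehop/(m^{1/4}+1) \geq 2m^{1/4}$ new $\mis$-members, while all conflicts lie in $\vh$, of size at most $m^{1/4}$ — ratio exactly $2$. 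When $\elltwohop > 4m^{3/4}$, one inserts $\geq 2m^{1/2}$ vertices while conflicts lie in $\vh \cup \vmh$, of size at most $m^{1/2}$. When both quantities are small, $\Lmis$ can be computed exactly and no conflicts arise. This matching of the greedy-MIS lower bound on insertions from the low-degree class against the cardinality of the ignored high-degree classes is the resolution of the cascade you worry about; without it the recursion you describe can indeed oscillate and the analysis does not close.
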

We first show that this lemma implies Theorem~\ref{thm:dynamic-m}.
\begin{proof}[Proof of Theorem~\ref{thm:dynamic-m}]
	For simplicity, we define $m=1$ in case of empty graphs. We start from the empty graph and run the algorithm in Lemma~\ref{lem:dynamic-m} until the number of edges $m_t$ in the graph differs from $m$
	by a factor more than $2$. This crucially implies that the total number of updates before terminating the algorithm (the parameter $K$ in Lemma~\ref{lem:dynamic-m}), is $\Omega(m)$. As such, we can invoke
	Lemma~\ref{lem:dynamic-m} to obtain an upper bound of $O(m^{3/4})$ on the amortized update time of the algorithm throughout these updates. We then update $m= m_t$
	and start running the algorithm in Lemma~\ref{lem:dynamic-m} on the current graph using the new choice of $m$. Clearly, this results in an amortized update time of $O(m^{3/4})$ where $m$ now denotes the
	number of dynamic edges in the graph. As the algorithm in Lemma~\ref{lem:dynamic-m} always maintain an MIS of the underlying graph, we obtain the final result.
\end{proof}

The rest of this section is devoted to the proof of Lemma~\ref{lem:dynamic-m}. In the following, we first describe the data structure maintained in the algorithm for storing the required information and its main properties and then present our update algorithm.

\subsection{The Data Structure}\label{sec:dynamic-m-ds}


For every vertex $v$, we maintain the following information:
\begin{tbox}
\begin{itemize}[itemsep=0pt]
	\item $\Nei{v}$: a list of current neighbors of $v$ in the graph.
	\item $\Deg{v}$: an estimate of the degree of $v$ to within a factor of two.
	\item $\NeiDeg{v}$: a list containing $\Deg{u}$ for every vertex $u$ in $\Nei{v}$.
	\item $\MISFlag{v}$: a boolean entry indicating whether or not $v$ belongs to $\mis$.
	\item $\MISNei{v}$: a counter denoting the size of a suitable subset of current neighbors of $v$ in $\mis$. Any vertex counted in $\MISNei{v}$ belongs to $\mis$ but not all neighbors of $v$ in $\mis$ are (necessarily) counted in $\MISNei{v}$
	(see Invariant~\ref{inv:neighbors} for more detail).
	\item $\MIStwoHopNei{v}$: a list, containing for every vertex $w$ in $\Nei{v}$, a counter that counts the size of a suitable subset of current neighbors of $w$ in $\mis$.
	Any vertex counted in $\MIStwoHopNei{v}[w]$ is also counted in $\MISNei{w}$ but \emph{not} vice versa
	(see Invariant~\ref{inv:2hop-neighbors} below for more detail).
\end{itemize}
\end{tbox}
Additionally, we maintain a partition $(\vh,\vmh,\vml,\vl)$ of the vertices into four sets based on their current approximate degree, namely, $\Deg{v}$.
In particular, $v$ belongs to $\vh$ iff $\Deg{v} \geq m^{3/4}$, to $\vmh$ iff $m^{3/4} > \Deg{v} \geq m^{1/2}$, to $\vml$ iff $m^{1/2} > \Deg{v} \geq m^{1/4}$,
and to $\vl$ iff $\Deg{v} < m^{1/4}$. We refer to the vertices of $\vl$ as the \emph{low-degree} vertices.
Throughout, we assume that in any of the lists maintained for a vertex by the algorithm, we can directly iterate over vertices of a particular subset in $(\vh,\vmh,\vml,\vl)$.
(This can be done, for example, by storing these lists as four separate linked lists, one per each such subset.)

The following invariant is concerned with the information we need from $\MISNei{v}$.

\begin{invariant}\label{inv:neighbors}
	For any vertex $v \in V \setminus \vl$, $\MISNei{v}$ counts the number of all neighbors of $v$  in $\mis$.
	For any vertex $v \in \vl$, $\MISNei{v}$ counts the number of neighbors of $v$ that are in $\mis$ but \emph{not} in $\vh$, i.e., are in $\mis \setminus \vh$.
\end{invariant}

By Invariant~\ref{inv:neighbors}, any vertex either knows the number of \emph{all} its neighbors in $\mis$ or is a low-degree vertex
and can iterate over all its neighbors in $O(m^{1/4})$ time to count this number. Moreover, even a low-degree vertex knows
the number of its neighbors in $\mis \setminus \vh$. This is crucial for our algorithm as in some cases, we need to iterate over \emph{many} vertices that belong to $\vl$ and
decide if they can join $\mis$ and hence cannot spend $O(m^{1/4})$ time per each vertex to determine this information. Note that the information we obtain in this way is ``noisy'', as we ignore some neighbors of vertices in $\vl$ that
are potentially in $\mis$. We shall address this problem
using a post-processing step that exploits the fact that the total number of ignored vertices, i.e., vertices in $\vh$, is small.

The following invariant is concerned with the information we need from $\MIStwoHopNei{v}$.
\vspace{-10pt}
\begin{invariant}\label{inv:2hop-neighbors}
	For any $v \in V$ and and every $u \in \Nei{v} \cap \vl$, $\MIStwoHopNei{v}[u]$ counts the number of vertices in $\vml \cup \vl$ that belong to $\mis$ and are neighbors of $u$ (the entry in $\MIStwoHopNei{v}[u]$ for any vertex $u \notin \vl$ is
	$\bot$).
\end{invariant}

Invariant~\ref{inv:2hop-neighbors} allows us to infer some nontrivial information about the two-hop neighborhood of any vertex. We use Invariant~\ref{inv:2hop-neighbors}
to quickly determine which neighbors of a vertex $v$ can be added to $\mis$ in case $v$ is deleted from it. Similar to the one-hop information we obtain through maintaining Invariant~\ref{inv:neighbors}, the information we obtain in this way
 is also ``noisy''.

We show how to update the information per each vertex after a change in the topology or $\mis$.
Maintaining $\Nei{v}$ under edge updates is straightforward. To maintain $\Deg{v}$, each vertex simply keeps a $2$-approximation of its degree in $\Deg{v}$.
Whenever the current actual degree of $v$ differs from $\Deg{v}$ by more than a factor of two, $v$ updates $\Deg{v}$ to its actual degree and informs {all} its neighbors $u \in \Nei{v}$ to update $\NeiDeg{u}$.
This requires only $O(1)$ amortized time.

The above information is a function of the underlying graph and not $\mis$. We also need to update the information per each vertex that are functions of $\mis$
whenever $\mis$ changes. Maintaining $\MISFlag{v}$ is trivial for any vertex $v$, hence in the following we focus on the remaining two parts.

Once a vertex $u$ changes its status in $\mis$, we apply the following algorithm to
update the value of $\MISNei{v}$ for every vertex $v$  (we only need to update this for $v \in \Nei{u}$).

\textbox{Algorithm $\UpdateNeighbors(u)$. \textnormal{An algorithm called whenever a vertex $u$ enters or exists $\mis$ to update $\MISNei{v}$ for neighbors of $u$.}}{
\begin{enumerate}[itemsep=0pt]
	\item If $u \in \vh$, update $\MISNei{v}$ for any vertex $v \in \Nei{u}$ \emph{not} in $\vl$ accordingly (i.e., add or subtract one depending on whether $u$ joined or left $\mis$).
	\item If $u \notin \vh$, update $\MISNei{v}$ for every vertex $v \in \Nei{u}$.
\end{enumerate}
}

It is immediate to see that by running $\UpdateNeighbors(u)$ in our main algorithm whenever a vertex $u$ is updated in $\mis$, we can maintain Invariant~\ref{inv:neighbors}. Also each call to $\UpdateNeighbors(u)$
takes $O(m^{3/4})$ time in worst-case since in both cases of the algorithm, we only need to update $O(m^{3/4})$ vertices: $(i)$ if $u \in \vh$, the algorithm only updates the vertices in $V \setminus \vl$ whose size is
$O(m^{3/4})$, and $(ii)$ if $u \notin \vh$, $u$ only has $O(m^{3/4})$ neighbors to update. We also point out that $\MISNei{v}$ can be updated easily whenever an edge incident on $(u,v)$ is inserted or deleted in $O(1)$ time
by simply visiting $\MISFlag{u}$ and updating $\MISNei{v}$ accordingly.

Now consider updating $\MIStwoHopNei{\cdot}$. We use the following algorithm on a vertex $u$ that has changed its status in $\mis$
to update $\MIStwoHopNei{v}$ for every vertex $v$ in the
graph (we only need to update this information for the two-hop neighborhood of $u$).

\textbox{Algorithm $\UpdateTwoHopNeighbors(u)$. \textnormal{An algorithm called when a vertex $u$ enters or exists $\mis$ to update $\MIStwoHopNei{v}$ for the two-hop neighborhood of $u$. }}{
\begin{enumerate}[itemsep=0pt]
	\item If $u \in \vml \cup \vl$, for any vertex $w \in \Nei{u}$:
	\begin{enumerate}[itemsep=0pt]
		\item If $w$ belongs to $\vl$, iterate over all vertices $v \in \Nei{w}$.
		\item For any such $v$, update $\MIStwoHopNei{v}[w]$ accordingly (i.e., add or subtract one depending on whether $u$ joined or left $\mis$).
	\end{enumerate}
\end{enumerate}
}

Each call to $\UpdateTwoHopNeighbors(u)$ takes $O(m^{3/4})$ time in the worst case. This is because $u$ only updates its neighbors if it has $O(m^{1/2})$ neighbors as $u$ should be in $\vml \cup \vl$ and
when it updates its neighbors, it changes the counter of $O(m^{1/4})$ vertices (as $w$ should be in $\vl$). This ensures that the running time of the algorithm is $O(m^{1/2} \cdot m^{1/4}) = O(m^{3/4})$.
Whenever an edge $(u,v)$ is updated in the graph, we can run a similar algorithm to update the two-hop neighborhood of $u$ and $v$ in the same way in $O(m^{3/4})$ time; we omit the details.
It is also straightforward to verify that by running $\UpdateTwoHopNeighbors(u)$ in our main algorithm whenever a vertex $u$ is updated in $\mis$, we preserve Invariant~\ref{inv:2hop-neighbors}.

Finally, recall that we also need a preprocessing step that given a graph $G$ initializes this data structure. We can implement this step by first initializing all non-MIS-related information in this data structure in $O(m)$ time (we do not need to
handle isolated vertices at this point).
Next, we run the greedy MIS algorithm to compute an MIS  $\mis$ of this graph in $O(m)$ time (again only on non-isolated vertices).
Finally, we update the information for every vertex in $\mis$ using the two procedures above which takes $O(m \cdot m^{3/4})$ time in total.
(We note that a more efficient implementation for this initial stage is possible.)
As $K = \Omega(m)$ in Lemma~\ref{lem:dynamic-m}, this (one time only) initialization cost is within the bounds stated in the lemma statement.

We summarize the results in this section in the following two lemmas.
\begin{lemma}\label{lem:ds}
	After updating any single edge, the data structure stored in the algorithm can be updated in $O(m^{3/4})$ \emph{amortized time}. Moreover, Invariants~\ref{inv:neighbors} and~\ref{inv:2hop-neighbors}
	hold after this update.
\end{lemma}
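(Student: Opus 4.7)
The plan is to break the cost of a single edge update into per-field contributions, bound each one directly, and then check that Invariants~\ref{inv:neighbors} and~\ref{inv:2hop-neighbors} remain satisfied. Everything is straightforward except for the ``reclassification'' events that fire when $\Deg{v}$ crosses one of the thresholds $m^{1/4}, m^{1/2}, m^{3/4}$ and $v$ migrates between the sets $\vh, \vmh, \vml, \vl$; handling these will be the main obstacle and the one place where amortization is needed nontrivially.

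First I would dispatch the easy fields. $\Nei{v}$ and $\MISFlag{v}$ take $O(1)$ worst-case time per edge update, the latter because edge updates alone never change MIS membership. $\Deg{v}$ is maintained as a $2$-approximation by the standard doubling trick, so it changes only after $\Omega(\Deg{v})$ incident edge updates; when it does change, I propagate the new value to $\NeiDeg{u}$ for every $u \in \Nei{v}$ in $O(\Deg{v})$ time, which amortizes to $O(1)$ per update. Next I would account for the local updates to $\MISNei$ and $\MIStwoHopNei$ induced by the new edge alone (with no reclassification). For $\MISNei$ the work is $O(1)$: look up $\MISFlag{}$ on the opposite endpoint and apply the filter prescribed by Invariant~\ref{inv:neighbors} based on the partition classes of both endpoints. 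For $\MIStwoHopNei$, if $u \in \vl$ on insertion I would initialize $\MIStwoHopNei{v}[u]$ by iterating $\Nei{u}$ in $O(m^{1/4})$ time; in addition, if the newly adjacent endpoint lies in $\mis \cap (\vml \cup \vl)$, I would bump $\MIStwoHopNei{x}[u]$ for each $x \in \Nei{u}$ in another $O(m^{1/4})$ time (symmetrically when $v \in \vl$). Deletions are handled analogously.

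The hard step is reclassification. Crossing the $\vmh$-$\vh$ boundary while $v \in \mis$ requires correcting $\MISNei{w}$ for each $w \in \Nei{v} \cap \vl$ (either starting or stopping to count $v$ per Invariant~\ref{inv:neighbors}), at cost $O(\Deg{v}) = O(m^{3/4})$. Crossing the $\vml$-$\vmh$ boundary flips $v$'s membership in $\vml \cup \vl$, so for every $w \in \Nei{v} \cap \vl$ I must adjust $\MIStwoHopNei{x}[w]$ for each $x \in \Nei{w}$, at cost $O(\Deg{v} \cdot m^{1/4}) = O(m^{3/4})$. Crossing the $\vl$-$\vml$ boundary additionally requires initializing or discarding the entries $\MIStwoHopNei{x}[v]$ for $x \in \Nei{v}$, at cost $O(\Deg{v}) = O(m^{1/4})$. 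The crucial observation that makes amortization go through is that every reclassification of $v$ is preceded by $\Omega(\Deg{v})$ intervening edge updates at $v$ (since $\Deg{v}$ must double or halve to trigger a class change); charging each reclassification cost to those updates contributes at most $O(m^{1/4})$ amortized per edge update at every threshold. Summing all contributions yields the $O(m^{3/4})$ amortized bound, and Invariants~\ref{inv:neighbors} and~\ref{inv:2hop-neighbors} hold after the update by construction of the rules above.
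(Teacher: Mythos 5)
Your proof is correct and follows essentially the same per-field decomposition as the paper's (where the paper simply notes that $\Deg{}$/$\NeiDeg{}$ maintenance is $O(1)$ amortized, $\MISNei{}$ is $O(1)$ per edge update, and the two-hop update is $O(m^{3/4})$ with ``we omit the details''). You are actually more explicit than the paper on the reclassification events at the thresholds $m^{1/4}$, $m^{1/2}$, $m^{3/4}$ and your amortization of their cost (dominated by the $\vml$–$\vmh$ crossing at $O(m^{1/4})$ per edge update) is sound; the only small omission, which does not affect the bound, is that crossing the $\vl$–$\vml$ boundary also changes the counting rule of Invariant~\ref{inv:neighbors} for $v$ itself and hence requires recomputing $\MISNei{v}$, but this is again only $O(\Deg{v}) = O(m^{1/4})$ work.
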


\begin{lemma}\label{lem:mis-update}
	After updating any single vertex in $\mis$, the data structure stored in the algorithm can be updated in $O(m^{3/4})$ \emph{worst case time}. Moreover, Invariants~\ref{inv:neighbors} and~\ref{inv:2hop-neighbors}
	hold after this update.
\end{lemma}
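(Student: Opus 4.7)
The plan is to observe that when a single vertex $u$ flips its status in $\mis$, the only fields of the data structure that depend on $\mis$ are $\MISFlag{u}$, the counters $\MISNei{v}$ for $v \in \Nei{u}$, and the entries of $\MIStwoHopNei{\cdot}$ lying in the two-hop neighborhood of $u$. All the topological fields ($\Nei$, $\Deg$, $\NeiDeg$, and hence the partition $(\vh,\vmh,\vml,\vl)$) are unaffected by an MIS change. So I would reduce the lemma to three tasks: $(i)$ flip $\MISFlag{u}$ in $O(1)$ time, $(ii)$ run $\UpdateNeighbors(u)$ to re-establish Invariant~\ref{inv:neighbors}, and $(iii)$ run $\UpdateTwoHopNeighbors(u)$ to re-establish Invariant~\ref{inv:2hop-neighbors}. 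Correctness of the two invariants under these procedures was already argued when the procedures were defined; what remains is an explicit worst-case bound on the time they take.

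For $\UpdateNeighbors(u)$ I would case-split on the degree class of $u$. If $u \in \vh$, the procedure only touches neighbors of $u$ that lie in $V\setminus\vl$. The key counting observation is that every vertex in $V\setminus\vl$ has approximate degree at least $m^{1/4}$, and hence actual degree at least $\Omega(m^{1/4})$, so $|V\setminus\vl| = O(m / m^{1/4}) = O(m^{3/4})$, and updating one counter per such vertex costs $O(m^{3/4})$ in the worst case. If $u \notin \vh$ then $\Deg{u} < m^{3/4}$, so $|\Nei{u}| = O(m^{3/4})$ and the straightforward traversal of $\Nei{u}$ costs $O(m^{3/4})$ as well.

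For $\UpdateTwoHopNeighbors(u)$ I would simply track the cost through the nested loop. Work is done only when $u \in \vml\cup\vl$, in which case $|\Nei{u}| = O(m^{1/2})$. For each neighbor $w \in \Nei{u}$ the procedure does nothing unless $w \in \vl$, in which case it iterates over $\Nei{w}$ of size $O(m^{1/4})$ and adjusts one counter per vertex encountered. Hence the total cost is at most $O(m^{1/2} \cdot m^{1/4}) = O(m^{3/4})$ in the worst case. Summing the three pieces yields the claimed $O(m^{3/4})$ worst-case update time, and the invariants hold by the arguments already presented alongside the procedures.

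The most delicate ingredient is really the design choice of the thresholds rather than a hard calculation: the bound for $\UpdateNeighbors$ in the ``$u \in \vh$'' branch relies on the degree-counting fact $|V\setminus\vl| = O(m^{3/4})$, and the bound for $\UpdateTwoHopNeighbors$ balances $|\Nei{u}| = O(m^{1/2})$ against $|\Nei{w}| = O(m^{1/4})$. Both routes land at $m^{3/4}$ precisely because the thresholds $m^{1/4}, m^{1/2}, m^{3/4}$ were chosen to equalize them; once this is observed, the rest is a direct verification that each worst-case branch is individually $O(m^{3/4})$ and that Invariants~\ref{inv:neighbors} and~\ref{inv:2hop-neighbors} are restored by the two procedures, completing the proof.
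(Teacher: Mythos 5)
Your proof is correct and follows the same route as the paper: decompose the update into flipping \texttt{MIS-flag}, running $\UpdateNeighbors$, and running $\UpdateTwoHopNeighbors$, then bound each by the same degree-class case analysis ($|V\setminus\vl| = O(m^{3/4})$ when $u\in\vh$; $|\Nei{u}| = O(m^{3/4})$ when $u\notin\vh$; the $m^{1/2}\cdot m^{1/4}$ product for the two-hop update). Your explicit degree-counting justification for $|V\setminus\vl|=O(m^{3/4})$ is a nice touch that the paper merely asserts, but the argument is otherwise identical.
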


\subsection{The Update Algorithm}\label{sec:dynamic-m-update}

The update algorithm is applied following edge insertions and deletions to and from the graph. After
any edge update, the algorithm updates the data structure and $\mis$. In order to do the latter task, the algorithm may need to remove and/or insert multiple vertices from and to $\mis$.
Since we already argued that maintaining the data structure requires $O(m^{3/4})$ amortized time (by Lemma~\ref{lem:ds}), from now on, without loss of generality, we only
measure the time needed to fix $\mis$ after any edge update and ignore the additive term needed to update the data structure.
The following is the core invariant that we aim to maintain in our algorithm.
\begin{invariant}[Core Invariant]\label{inv:core}
	Following every edge update, the set $\mis$ maintained by the algorithm is an MIS of the input graph. Moreover,
	\begin{enumerate}[label=(\roman*),itemsep=0pt]
	\item\label{part:core1} if only a single vertex leaves $\mis$, then there is no restriction on the number of vertices joining $\mis$ (which could be zero).
	\item\label{part:core2} if at least two vertices leave $\mis$, then at least twice as many vertices join $\mis$.
	\end{enumerate}
	In either case, the total time spent by the algorithm to fix $\mis$ for an edge update is at most an $O(m^{3/4})$ factor larger than the total number of vertices leaving and joining $\mis$.
\end{invariant}
Before showing how to maintain Invariant~\ref{inv:core}, we present the proof of Lemma~\ref{lem:dynamic-m} using this invariant.
\begin{proof}
The main idea behind the proof is as follows.
By Invariant~\ref{inv:core}, after each step, the size of $\mis$ either decreases by at most one, or it will increase.
At the same time, $\mis$ cannot grow more than $n$, the number of vertices in the graph. It then follows that the \emph{average} number of
changes to $\mis$ per each update is $O(1)$. As we only spend $O(m^{3/4})$ per each update,
we obtain the final result. We now present the formal proof using the following charging scheme.

Recall that we compute an MIS $\mis$ of the initial graph in the preprocessing step and that the initialization phase takes $O(m \cdot m^{3/4})$ time in total.
We place $O(m^{3/4})$ ``extra budget'' on vertices in the initial graph that do not belong to $\mis$ to be spent
later when these vertices are inserted to $\mis$. As the number of  such vertices is $O(m)$, this
extra budget can be charged to the time spent in the initialization phase. Note that at this point, an extra budget is allocated to any vertex not in $\mis$ and we maintain this throughout the algorithm.

Whenever an update results in only a single vertex leaving $\mis$ (corresponding to Part \ref{part:core1} of Invariant~\ref{inv:core}),
we spend $O(m^{3/4})$ time to handle this vertex and additionally place $O(m^{3/4})$ budget on this vertex and then for the vertices inserted to $\mis$, we simply use the extra budget allocated to these vertices before
to charge for the $O(m^{3/4})$ time needed to handle each.
If an update results in removing $k > 1$ vertices from $\mis$, we know that at least $2 \cdot k$ vertices would be added to $\mis$ after this update (corresponding to Part \ref{part:core2} of Invariant~\ref{inv:core}).
In this case, we use the $O(m^{3/4})$ extra budget on these (at least) $2\cdot k$ vertices that are joining $\mis$ to charge for the time needed to insert these vertices to $\mis$, remove the $k$ initial vertices from $\mis$, and place $O(m^{3/4})$ extra budget
on every removed vertex. As a result, this type of updates can be handled free of charge. Finally, if an update only involves inserting some vertices to $\mis$, we simply use the budgets on these vertices to handle them free of charge.
This finalizes the proof of Lemma~\ref{lem:dynamic-m}.

We point out that using the above charging scheme, we can also argue that the average number of changes to $\mis$ is $O(1)$ in each update.
\end{proof}

Fix a time step $t$ and suppose the invariant holds up until this time step. Let $e_t = (u_t,v_t)$ be the edge updated at this time step.
In the remainder of this section, we describe one round of the update algorithm to handle this single edge update and preserve Invariant~\ref{inv:core}.

\subsubsection{Edge Deletions}\label{sec:deletions}

We start with the easier case of deleting an edge $e_t = (u_t,v_t)$.

\smallskip
\noindent
\emph{\textbf{Case 1:} Neither $u_t$ nor $v_t$ belong to $\mis$.} In this case, there is nothing to do.

\smallskip
\noindent
\emph{\textbf{Case 2:} $u_t$ belongs to $\mis$ but not $v_t$ (or vice versa).} After deleting the edge $e_t$, it is possible that $v_t$ may need to join $\mis$ as well. We first check whether $\MISNei{v_t} = 0$.
If not, there is nothing else to do as $v_t$ is still adjacent to some vertex in $\mis$. Otherwise, we need to ensure that $v_t$ does not have any neighbor in $\mis$ (outside those vertices counted in $\MISNei{v_t}$).
If $v_t \in V \setminus \vl$, by Invariant~\ref{inv:neighbors},
$\MISNei{v_t}$ counts \emph{all} neighbors of $v_t$ and hence there is nothing more to check. If $v_t \in \vl$, we can go over all the $O(m^{1/4})$ vertices in the neighborhood of $v_t$ and check
whether $v_t$ has a neighbor in $\mis$ or not. This only takes $O(m^{1/4})$ time in the worst case. Again, if we find a neighbor in $\mis$ there is nothing else to do. Otherwise, we add $v_t$ to $\mis$ and update
the data structure which takes $O(m^{3/4})$ time in the worst case by Lemma~\ref{lem:mis-update}. After this step, $\mis$ is again a valid MIS and hence Invariant~\ref{inv:core} is preserved as
we only spent $O(m^{3/4})$ time and inserted at most one vertex to $\mis$ without deleting any vertex from it.


\smallskip
\noindent
\emph{\textbf{Case 3:} Both $u_t$ and $v_t$ belong to $\mis$.} This case is not possible in the first place by Invariant~\ref{inv:core} as otherwise $\mis$ maintained by the algorithm before this edge update was not an MIS.

\subsubsection{Edge Insertions}\label{sec:insertions}

We now consider the by far more challenging case of edge insertions where we concentrate bulk of our efforts.
It is immediate to see that the only time we need to handle an edge insertion is when the inserted edge $e_t = (u_t,v_t)$ connects two vertices already in $\mis$ (there is nothing to do in the remaining cases).
Hence, in the following, \emph{we assume both $u_t$ and $v_t$ belong $\mis$}.

To ensure that $\mis$ is an independent set, we first need to remove one of $u_t$ or $v_t$ from it and then potentially insert some of the neighbors of the deleted vertex to $\mis$ to ensure its maximality.
Let $u_t$ be the deleted vertex (the choice of which vertex to delete is arbitrary). After deleting $u_t$, we update the algorithm's data structure in $O(m^{3/4})$ time by Lemma~\ref{lem:mis-update}.

Let $L := \Nei{u_t} \cap \vl$ denote the set of low degree neighbors of $u_t$.
We first show that one can easily handle all neighbors of $u_t$ which are \emph{not} in $L$. To do so, we can iterate over these vertices as there are $O(m^{3/4})$ of them and for any vertex $w$, by Invariant~\ref{inv:neighbors}, we know
whether $w$ can be added $\mis$ or not by simply checking $\MISNei{w}$. Hence, we can add the necessary vertices to $\mis$ and spend $O(m^{3/4})$ time for each inserted one using Lemma~\ref{lem:mis-update}.
As such, we spend $O(m^{3/4})$ time for iterating the vertices which did not join $\mis$ and $k \cdot O(m^{3/4})$ time for the $k$ vertices that joined $\mis$.
Hence, Invariant~\ref{inv:core} is preserved after this step.

We now consider the challenging case of updating the neighbors of $u_t$ that belong to $L$. As the number of such vertices is potentially very large, we cannot iterate over all of them anymore.
Define the following subsets of $L$:
\begin{itemize}[leftmargin=*,itemsep=0pt]
	\item $\Lmis \subseteq L$: the set of vertices in $L$ that do not have any neighbor in $\mis$.  
	\item $\Lonehop \supseteq \Lmis$: all vertices $w \in L$ where $\MISNei{w} = 0$ i.e., our algorithm did not count any  neighbor for them in $\mis$. Recall that $\MISNei{w}$ does \emph{not} count
	all neighbors of $w$ in $\mis$; it is missing the vertices in $\vh$ by Invariant~\ref{inv:neighbors}. 
	\item $\Ltwohop \supseteq \Lonehop \supseteq \Lmis$: all vertices $w \in L$, where $\MIStwoHopNei{u_t}[w] = 0$. Again, recall that $\MIStwoHopNei{u_t}[w_t]$ does \emph{not} count
	all neighbors of $w \in \MISNei{w}$ (and consequently in $\mis$); it misses the vertices in $\vmh$ in $\MISNei{w}$ (and additionally $\vh$ in $\mis$)
	by Invariant~\ref{inv:2hop-neighbors}. 
\end{itemize}

Let $\ellmis := \card{\Lmis}$, $\ellonehop := \card{\Lonehop}$ and $\elltwohop := \card{\Ltwohop}$, where $\ellmis \leq \ellonehop \leq \elltwohop$.
Our algorithm does not know the sets $\Lmis$ and $\Lonehop$ or even their sizes. However, the update algorithm knows the value of $\elltwohop$ and
has access to vertices in $\Ltwohop$ through the list $\MIStwoHopNei{u_t}$ and can iterate over them in $O(1)$ time per each vertex in $\Ltwohop$ (notice that even this can be potentially too time
consuming as size of this list can be too large). We consider different cases based on the value of these parameters.

\smallskip
\noindent
\emph{\textbf{Case 1:} when $\elltwohop$ is small, i.e., $\elltwohop \leq 4 \cdot m^{3/4}$.}
In this case, we iterate over vertices $w \in \Ltwohop$ in $O(\elltwohop) = O(m^{3/4})$ time and check whether $\MISNei{w} = 0$ or not.
This allows us to compute the set $\Lonehop$ and $\ellonehop$ as well. We further distinguish between two cases.

\smallskip
\noindent
\emph{\textbf{Case 1-a:} when $\ellonehop$ is very small, i.e., $\ellonehop \leq 4 \cdot m^{1/2}$.}
We iterate over vertices $w \in \Lonehop$ and for each vertex, spend  $O(m^{1/4})$  time to
go over all its neighbors and decide whether $w$ has any neighbor in $\mis$ or not (degree of $w$ is $O(m^{1/4})$ since it belongs to $\vl$). Hence, in this case, we can obtain the
set $\Lmis$ fully in $O(m^{3/4})$ time in total.

We then iterate over vertices in $\Lmis$, insert each one greedily to $\mis$, and update the data structure in $O(m^{3/4})$ time using Lemma~\ref{lem:mis-update}. It is possible that some vertices in $\Lmis$ are adjacent to each other and hence
before inserting any vertex $w$, we first need to check $\MISNei{w}$ to make sure it is zero still (by Invariant~\ref{inv:neighbors} and since all vertices in $\Lmis$ belong to $\vl$, any vertex added to $\mis$ here would update
$\MISNei{w}$ for any neighbor $w$). Hence, in this case, we spend $O(m^{3/4})$ time for each vertex inserted to $\mis$ and did not delete any vertex from it. Therefore, Invariant~\ref{inv:core} is preserved after the edge update
in this case.

\smallskip
\noindent
\emph{\textbf{Case 1-b:} when $\ellonehop$ is \emph{not} very small, i.e., $\ellonehop > 4 \cdot m^{1/2}$.}
In this case, we cannot afford to compute $\Lmis$ explicitly. Rather, we simply add the vertices in $\Lonehop$ to $\mis$ directly,
without considering whether they are adjacent to vertices already in $\mis$ or not at all (although we check that they are not adjacent to the previously inserted vertices from $\Lonehop$).
As a result, it is possible that after this process, $\mis$ is not an independent set of the graph anymore. To fix this,
we perform a post processing step in which we delete some vertices from $\mis$ to ensure that the remaining vertices indeed form an MIS of the original graph.

Concretely, we go over vertices in $\Lonehop$ and insert each to $\mis$ if none of its neighbors have been added to $\mis$ \emph{in this step}, and then invoke Lemma~\ref{lem:mis-update} to update the algorithm's data structure. Since in this step, we
are only adding vertices that are in $\vl$, we can check in $O(1)$ time whether a vertex has a neighbor in $\mis$ (that has been added in this step) or not by Invariant~\ref{inv:neighbors}. This step clearly takes $O(m^{3/4})$ time per each vertex inserted to the
MIS.

At this point, it is possible that there are some vertices in $\mis$ which are adjacent to the newly inserted vertices. By Invariant~\ref{inv:neighbors}, we know that these vertices can only belong to $\vh$ and hence there are at most $m^{1/4}$
of them. We iterate over all vertices in $\vh$ and check whether they have a neighbor in $\mis$ (by Invariant~\ref{inv:neighbors}, we stored this information for these vertices) and mark all such vertices. Next,
we remove all these marked vertices from $\mis$ simultaneously and update the algorithm's state by Lemma~\ref{lem:mis-update}. We are not done yet though because after removing these vertices, it is possible that we may need to bring
some of their neighbors back to $\mis$. We solve this problem recursively using the same update algorithm by treating these marked vertices the same as $u_t$.

We argue that Invariant~\ref{inv:core} is preserved. As the degree of vertices in $\Lonehop$ is bounded by $m^{1/4}$, the number of vertices
added to $\mis$ in this part is at least $\frac{\ellonehop}{(m^{1/4}+1)} \geq 2 \cdot m^{1/4}$ (by Fact~\ref{fact:greedy-mis} and the assumption on $\ellonehop$ in this case). On the other hand, the number of vertices removed from $\mis$
is at most equal to size of $\vh$ which is $m^{1/4}$. As a result, in this specific step, the number of vertices inserted to $\mis$ is at least twice as many as the vertices removed from it.
For any vertex inserted or deleted from $\mis$ also, we spent $O(m^{3/4})$ time. As we are performing the recursive step using the same algorithm, we can argue inductively that for any vertex deleted in those recursive calls, at least
twice as many vertices would be added to $\mis$ and that the total running time would be proportional to the number of vertices added or removed from $\mis$ times $O(m^{3/4})$. We point out that any recursive call that leads to another one in this algorithm
necessarily increase the number of vertices in $\mis$ and hence the algorithm does indeed terminate (see also case \emph{2}).

\smallskip
\noindent
\emph{\textbf{Case 2:} when $\elltwohop$ is \emph{not} small, i.e., $\elltwohop > 4 \cdot m^{3/4}$.}
We use a similar strategy as case $\emph{1-b}$ here as well. We iterate over all vertices in $\Ltwohop$, greedily add each vertex to $\mis$  as long as this vertex is not adjacent to any of the newly added vertices (which can be checked in $O(1)$ time by Invariant~\ref{inv:neighbors}), and
update the data structure using Lemma~\ref{lem:mis-update}. As the maximum degree of
vertices in $\Ltwohop$ is at most $m^{1/4}$, we add at least $\frac{\elltwohop}{m^{1/4} + 1} > 2 \cdot m^{1/2}$ vertices to $\mis$ by Fact~\ref{fact:greedy-mis}. By Invariant~\ref{inv:2hop-neighbors}, if a vertex belongs to $\Ltwohop$, the only neighbors
of this vertex in $\mis$ belong to $\vh$ or $\vmh$ and hence has degree at least $m^{1/2}$. We go over these vertices next and mark them. Then, we remove all of them from $\mis$ simultaneously and update the algorithm by Lemma~\ref{lem:mis-update}.
Similar to case \emph{1-b}, we now also have to consider bringing some of the neighbors of these vertices to $\mis$ which is handled recursively exactly the same way as in case \emph{1-b}.

We first analyze the time complexity of this step. Iterating over $\Ltwohop$ takes $O(\card{\elltwohop}) = O(m)$ time and since we are inserting at least $m^{1/2}$ vertices from $\Ltwohop$ to $\mis$, we can charge the time needed
for this step to the time allowed for inserting these vertices to $\mis$. Moreover, we inserted at least $2 \cdot m^{1/2}$ vertices to $\mis$ and would remove at most $m^{1/2}$ vertices after considering violating vertices in $\vh$ and $\vmh$. Hence,
number of inserted vertices is at least twice the number of removed ones at this step. We can also argue inductively that this property hold for each recursive call similar to the case \emph{1-b}. This finalizes the proof of this case.

\medskip

To conclude, we proved that Invariant~\ref{inv:core} is preserved after any edge insertion or deletion in the algorithm, which finalizes the proof of Lemma~\ref{lem:dynamic-m}.

\section{Maximal Independent Set in Dynamic Distributed Networks}\label{app:distributed}

We consider the $\CONGEST$ model of distributed computation (cf.~\cite{PelB00}) which captures the essence of both spatial locality and congestion.
The network is modeled by an undirected graph $G(V,E)$ where the vertex-set is $V$, and $E$ corresponds to both the edge-set in the current graph and also the vertex pairs that can directly communicate with each other.
We assume a synchronous communication model, where time is divided into rounds and in each round, each vertex can send a message of size $O(\log{n})$ bits to any of its neighbors, where $n = \card{V}$.
The goal is to maintain an MIS $\mis$ in $G$ in a way that each vertex is able to output whether or not it belongs to $\mis$.

We focus on dynamically changing networks where both edges and vertices can be inserted to or deleted from the network. For deletions, we consider
\emph{graceful deletions} where the deleted vertex/edge may be used for passing messages between its neighbors (endpoints),
and is only deleted completely once the network is stable again.
After each change, the vertices communicate with each other to adjust their outputs, namely make the network \emph{stable} again.
We make the standard assumption that the changes occur in large enough time gaps, and hence the network is always stable before the next change occurs (see, e.g.,~\cite{ParterPS16,CHK16}).
We further assume that each change in the network is indexed and vertices affected by this change know how many updates have happened before\footnote{This is only needed by our algorithm in Theorem~\ref{thm:distributed-m} to
have an approximation of the number of edges in the graph, which is a global quantity and cannot be maintained by each vertex locally}.

There are three complexity measures for the algorithms in this model. The first is the so-called \emph{adjustment complexity}, which measures the number of vertices that change their output as a result
of a recent topology change. The second is the \emph{round complexity}, the number of rounds required for the network to become stable again after each update.
The third is the \emph{message complexity}, measuring the total number of $O(\log{n})$-length messages communicated by the algorithm.

Our main result in this section is an implementation of Theorem~\ref{thm:dynamic-m} in this distributed setting for maintaining an MIS in a dynamically changing network.
\begin{theorem}\label{thm:distributed-m}
	Starting from an empty distributed network on $n$ vertices, a maximal independent set can be maintained \emph{deterministically} in a distributed fashion (under the \CONGEST communication model)
	over any sequence of vertex/edge insertions and (graceful) deletions with $(i)$ $O(1)$ \emph{amortized adjustment complexity}, $(ii)$ $O(1)$ \emph{amortized round complexity}, and $(iii)$ $O(m^{3/4})$ \emph{amortized message complexity}.
	Here, $m$ denotes the number of dynamic edges.
\end{theorem}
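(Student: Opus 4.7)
The plan is to show that the sequential algorithm from Theorem~\ref{thm:dynamic-m} admits a natural distributed implementation in $\CONGEST$. Every vertex $v$ stores locally its own portion of the data structure from Section~\ref{sec:dynamic-m-ds} (namely $\Nei{v}$, $\Deg{v}$, $\NeiDeg{v}$, $\MISFlag{v}$, $\MISNei{v}$, and $\MIStwoHopNei{v}$). Whenever a vertex $u$ changes its MIS status or its degree class, \UpdateNeighbors and \UpdateTwoHopNeighbors are realized by $u$ sending a single $O(\log n)$-bit message to each affected one-hop neighbor, or via a one-hop relay through low-degree intermediaries for the two-hop case; each recipient has enough locally stored information (the sender's degree class and identity, plus the identity of the relay in the two-hop case) to update its own entries correctly. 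Correctness of the maintained MIS is then inherited directly from Lemma~\ref{lem:dynamic-m}.

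The \emph{message complexity} bound is immediate: every elementary operation the sequential algorithm of Lemma~\ref{lem:dynamic-m} performs on the data structure (reading or updating a single entry, or visiting a neighbor through $\Nei{\cdot}$, $\MIStwoHopNei{\cdot}$) translates into a constant number of $O(\log n)$-bit messages in the simulation, so the $O(m^{3/4})$ amortized sequential time yields an $O(m^{3/4})$ amortized message bound. The \emph{adjustment complexity} is also already built into the analysis: the charging scheme proven alongside Lemma~\ref{lem:dynamic-m} certifies that the total number of vertices whose $\MISFlag$ changes over any sequence of $K$ updates is $O(K)$, giving $O(1)$ amortized adjustment complexity.

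The main new work lies in bounding the \emph{round complexity}. The key observation is that every individual \emph{phase} of the update algorithm of Sections~\ref{sec:deletions} and~\ref{sec:insertions} runs in $O(1)$ rounds: \UpdateNeighbors is a one-round broadcast, \UpdateTwoHopNeighbors is a two-round relay through $\vl$-intermediaries, the exploration steps over $\Lonehop$ and $\Ltwohop$ in Cases 1-a, 1-b, and 2 reduce to one-round broadcasts from $u_t$ to its (already locally known) relevant neighbors followed by a single round of replies, and the sweep over $\vh \cup \vmh$ that identifies violating MIS vertices after the greedy insertions is again a single broadcast-and-reply exchange. Parallel greedy insertions into $\mis$ within $\Lonehop$ or $\Ltwohop$ can be made mutually consistent by an $O(1)$-round ID-based tie-breaking rule among candidates; crucially, every candidate lies in $\vl$ and hence has degree at most $m^{1/4}$, which allows every candidate to exchange IDs with all its candidate neighbors in a single round. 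Since the resulting inserted set still satisfies the Fact~\ref{fact:greedy-mis} lower bound on its size, the accounting in Invariant~\ref{inv:core} is preserved.

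Each phase therefore costs $O(1)$ rounds, and any phase that triggers a recursive invocation must, by Invariant~\ref{inv:core}, have caused at least one new vertex to enter $\mis$ and at least one to leave. Consequently the total number of phases executed across any sequence of $K$ updates is bounded by a constant times the total adjustment complexity, which is $O(K)$, and dividing by $K$ yields the $O(1)$ amortized round bound. The hardest point I expect is precisely this parallel greedy-insertion step: one must ensure that the simultaneous ID-based rule inside $\Lonehop$ (respectively $\Ltwohop$) still inserts $\Omega(\ellonehop/m^{1/4})$ (respectively $\Omega(\elltwohop/m^{1/4})$) vertices into $\mis$ in $O(1)$ rounds, because the recursion-bounding half of Invariant~\ref{inv:core} and therefore the entire amortized round analysis hinges on this quantitative guarantee. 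Extending the scheme to vertex insertions/deletions and the asynchronous model then follows by the same message-simulation argument as in~\cite{CHK16}.
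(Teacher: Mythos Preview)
Your overall framework---storing the Section~\ref{sec:dynamic-m-ds} data structure locally, simulating \UpdateNeighbors and \UpdateTwoHopNeighbors as one- and two-hop relays, and inheriting the message and adjustment bounds from the sequential analysis---matches the paper's approach. The divergence, and the genuine gap, is exactly where you flag it: the parallel greedy-insertion step.

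Your claim that an $O(1)$-round deterministic ID-based tie-breaking rule on the candidates in $\Lonehop$ (or $\Ltwohop$) produces an independent set of size $\Omega(\ellonehop/m^{1/4})$ is false. Fact~\ref{fact:greedy-mis} is a statement about the \emph{full} greedy process (equivalently, about any \emph{maximal} independent set), not about a single round of local-minimum selection. Concretely, on a path of $k$ candidates whose IDs are $1,2,\ldots,k$ in order, one round of ``join if your ID is smaller than all candidate neighbors'' admits only vertex $1$; iterating any constant number of rounds admits only $O(1)$ vertices, not the $\Omega(k)$ you need. More fundamentally, Linial's $\Omega(\log^{*} n)$ lower bound for deterministic MIS on a cycle rules out computing a maximal (hence $\Omega(k/\Delta)$-sized) independent set of the candidate subgraph in $O(1)$ rounds. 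Once this size guarantee fails, Part~\ref{part:core2} of Invariant~\ref{inv:core} is not maintained, the recursion is not controlled, and your amortized round argument collapses.

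The paper avoids this entirely by \emph{not} attempting parallel insertion. Instead, the deleted vertex $u_t$ acts as a coordinator and processes $\Lonehop$ \emph{sequentially}: it queries its neighbors, picks one that can join $\mis$, tells it to join, waits for that vertex to run \UpdateNeighbors/\UpdateTwoHopNeighbors (each $O(1)$ rounds by Lemma~\ref{lem:distributed-m-updates}), then queries again and repeats. A phase that inserts $k$ vertices therefore costs $O(k)$ rounds, not $O(1)$; the distributed invariant (Invariant~\ref{inv:distributed}) only asserts that the round count is within an $O(1)$ factor of the number of vertices entering or leaving $\mis$. The $O(1)$ \emph{amortized} round bound then follows because the total number of $\mis$ changes over $K$ updates is $O(K)$ by the same charging argument you already cite. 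In short: you should bound rounds per phase by the adjustment count of that phase, not by a constant, and then amortize.
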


The algorithm in Lemma~\ref{lem:dynamic-delta} can also be trivially implemented in this distributed setting, resulting in \emph{an extremely simple deterministic distributed algorithm for maintaining an MIS of a dynamically
changing graph in $O(1)$ amortized adjustment complexity and round complexity, and $O(\Delta)$ amortized message complexity}. As argued before, this simple algorithm already strengthens the previous randomized
algorithm of Censor-Hillel~\etal~\cite{CHK16} by virtue of being deterministic and not requiring an assumption of a non-adaptive oblivious adversary.  In the following, we compare our results in the distributed setting with
those of~\cite{CHK16}.

\paragraph{Amortized vs in Expectation Guarantee.} The guarantees on the complexity measures provided by our deterministic algorithms in this setting are \emph{amortized}, while the randomized algorithm in~\cite{CHK16} achieves
its bound \emph{in expectation} which may be considered somewhat stronger than our guarantee. To achieve this guarantee however, the algorithm in~\cite{CHK16}, besides using randomization, also assumes
a non-adaptive oblivious adversary. An adaptive adversary (the assumption supported by all our algorithms in this paper) can force the algorithm in~\cite{CHK16} to adjust the MIS by $\Omega(n)$ vertices in every round, which
in turn blows up all the complexity measures in~\cite{CHK16} by a factor of $\Omega(n)$. It is also worth mentioning that the guarantee achieved by~\cite{CHK16} only holds in expectation and not \emph{with high probability} and for a
fundamental reason: It was shown in~\cite{CHK16} that for every value of $k$, there exists an instance for which at least $\Omega(k)$ adjustments are needed for any algorithm with probability at least $1/k$ (see Section 1.1 of their paper).

\paragraph{Broadcast vs Unicast.} The communication in algorithm of~\cite{CHK16} in each round is $O(1)$ broadcast messages in expectation that requires only $O(1)$ bits on every edge
(i.e., each vertex communicates the same $O(1)$ bits to every one of its neighbors). As such, the total communication at every round of this algorithm is $O(\Delta)$ bits in expectation.
Our amortized $O(\Delta)$-message complexity algorithm (distributed implementation of Lemma~\ref{lem:dynamic-delta}) also works with the same guarantee: indeed, every
vertex simply needs to send $O(1)$ bits to all its neighbors in a broadcast manner so that their neighbors know whether to add or subtract the contribution of this vertex to or from their counter.
This is however not the case for our main algorithm in Theorem~\ref{thm:distributed-m} which requires a processor to communicate differently to its neighbor over each edge (in general, one cannot hope to achieve $o(\Delta)$ communication
with only broadcast messages). Additionally, this algorithm
now requires to communicate $O(\log{n})$ bits (as opposed to $O(1)$ in the previous two algorithms) over every edge. This is mainly due to the fact that in this new algorithm we need to communicate
with vertices which are at distance $2$ of the current vertex and hence we need to carry the ID of original senders in the messages also.

\paragraph{Graceful vs Abrupt Deletions.} A stronger notion of deletion in the dynamic setting is \emph{abrupt deletion} in which the neighbors of the deleted vertex/edge simply discover that this vertex/edge is being deleted and
the deleted vertex/edge cannot be used for communication anymore right after the deletion happens. Censor-Hillel~\etal~\cite{CHK16} also extend their result to this more general setting and achieved the same guarantees except for
message complexity of abrupt deletion of a node which is now $O(\min\set{\log{n},\Delta})$ broadcasts as opposed to $O(1)$. We do not consider this model explicitly. However, it is straightforward to verify that our
amortized $O(\Delta)$-message complexity algorithm (distributed implementation of Lemma~\ref{lem:dynamic-delta}) works in this more general setting with virtually no change and even still achieves amortized
$O(1)$ broadcast per abrupt deletion of a vertex as well. We believe that our main algorithm in Theorem~\ref{thm:distributed-m} should also work in this more general setting with proper modifications but we did not prove this formally.

\paragraph{Synchronous vs Asynchronous Communication.} We focused only on the synchronous communication in this paper. Censor-Hillel~\cite{CHK16} also considered the asynchronous
model of communication and showed that their algorithm holds in this model as well, albeit with a weaker guarantee on its message complexity. Our algorithms can be modified to work in an asynchronous model as well, as at each stage of the algorithm
we can identify a (different) local ``coordinator'' that can be used to synchronize the operations with an added overhead that is within a constant multiplicative of the synchronous complexity (as per each update only vertices within two-hop neighborhood of a vertex
need to communicate with each other in our algorithm); we omit the details but refer the reader to Section~\ref{sec:distributed-m-edge-insertions} for more information on the use of a local coordinator in our algorithms.

\medskip

We now turn to proving Theorem~\ref{thm:distributed-m}, using the following lemma the same way we used Lemma~\ref{lem:dynamic-m} in the proof of Theorem~\ref{thm:dynamic-m}.

\begin{lemma}\label{lem:distributed-m}
	Starting with any arbitrary graph on $n$ vertices and $m$ edges, a maximal independent set $\mis$ can be maintained \emph{deterministically} in a distributed fashion (under the \CONGEST communication model) over any
	sequence of $K = \Omega(m)$ vertex/edge insertions and (graceful) deletions as long as the number of edges in the graph remains within a factor $2$ of $m$. The algorithm:
	\begin{enumerate}[label=(\roman*)]
		\item makes $O(K)$ adjustment to $\mis$ in total, i.e., has $O(1)$ amortized adjustment complexity,
		\item requires $O(K)$ rounds in total, i.e., has $O(1)$ amortized round complexity, and
		\item communicates $O(K \cdot m^{3/4})$ messages in total, i.e., has $O(m^{3/4})$ amortized message complexity.
	\end{enumerate}
\end{lemma}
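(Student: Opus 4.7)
The plan is to mirror the proof of Lemma~\ref{lem:dynamic-m}, distributing its data structure and update procedure across the vertices and arguing that every unit of sequential work translates into $O(\log n)$-bit messages sent by one vertex to a neighbor or a two-hop neighbor. First I would have each vertex $v$ store, locally, its own entries $\Nei{v}$, $\Deg{v}$, $\NeiDeg{v}$, $\MISFlag{v}$, $\MISNei{v}$, and $\MIStwoHopNei{v}$; this requires only that each vertex knows its own ID, its neighbors' IDs, and (a constant approximation of) its own degree. Since $m$ is a global quantity, I would use the assumption that every change is indexed so every vertex can compute a consistent $2$-approximation of $m$, and thus can consistently assign itself to one of the four degree classes $\vh, \vmh, \vml, \vl$. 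Maintenance of the non-MIS portion of the data structure (the degree class of a vertex and $\NeiDeg{\cdot}$ entries at its neighbors) then requires that whenever $v$'s stored degree changes by a factor of two, $v$ broadcasts this to all its neighbors; charged over the edge updates that caused the doubling, this is $O(1)$ amortized messages per edge update.

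Next I would simulate the updates to $\MISNei{\cdot}$ and $\MIStwoHopNei{\cdot}$. Procedure $\UpdateNeighbors(u)$ becomes: $u$ sends a single increment/decrement message to each relevant neighbor (to all of $\Nei{u}\setminus \vl$ if $u\in\vh$, and to all of $\Nei{u}$ otherwise). Procedure $\UpdateTwoHopNeighbors(u)$ becomes: $u$ sends a message to each $w\in\Nei{u}\cap\vl$, and each such $w$ forwards it to every vertex in $\Nei{w}$; since $u\in\vml\cup\vl$ and $w\in\vl$, this uses at most $O(m^{3/4})$ messages and $O(1)$ rounds in parallel. The messages only carry the ID of $u$ and a $\pm 1$ bit, fitting in $O(\log n)$ bits. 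All of these can be pipelined so that reacting to a single MIS-status change of $u$ costs $O(1)$ rounds and $O(m^{3/4})$ messages, matching Lemma~\ref{lem:mis-update}.

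For the update algorithm itself, I would designate the endpoint of the inserted/deleted edge as the local coordinator; on an edge insertion between two MIS vertices, the coordinator $u_t$ orchestrates the removal of itself from $\mis$, then, using its locally stored $\Nei{u_t}$, $\MISNei{\cdot}$ (pulled via $O(1)$-round queries from non-low-degree neighbors), and the already-stored $\MIStwoHopNei{u_t}$ list, it decides locally which of the three cases of Section~\ref{sec:insertions} applies and issues messages accordingly. Every ``inspect $\MISNei{w}$'' or ``add $w$ to $\mis$'' step in the sequential algorithm becomes a single $O(\log n)$-bit query/update message on the edge $(u_t,w)$ or $(w,v)$, so the total messages generated for a given update are within a constant factor of the sequential running time, and the number of rounds is $O(1)$ per vertex that joins or leaves $\mis$ (since all messages in one ``phase'' can be sent in parallel and then the coordinator advances). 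The recursive treatment of marked vertices in Cases 1-b and 2 is implemented by passing coordinator-status to each newly removed vertex (simultaneously), and since the sizes of $\vh$ (resp.\ $\vh\cup\vmh$) are $O(m^{1/4})$ (resp.\ $O(m^{1/2})$), these can run in parallel in $O(1)$ further rounds.

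By Invariant~\ref{inv:core}, the same charging argument used in the proof of Lemma~\ref{lem:dynamic-m} yields $O(1)$ amortized adjustments per update and $O(m^{3/4})$ amortized messages per update; since each $\mis$-change contributes $O(1)$ rounds, the amortized round complexity is also $O(1)$. The main obstacle I expect is making the parallel phases of Cases 1-b and 2 round-efficient in the CONGEST model: when $u_t$ inserts many vertices of $\Lonehop$ or $\Ltwohop$ into $\mis$ it must coordinate so that no two adjacent vertices join simultaneously, and it must then broadcast ``mark if you violate'' to $\vh$ (respectively $\vh\cup\vmh$) and collect the responses, all within $O(1)$ amortized rounds. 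I would handle this by letting $u_t$ scan $\Ltwohop$ (or $\Lonehop$) in its own local memory using the greedy MIS rule of Fact~\ref{fact:greedy-mis}, committing insertions one batch at a time while sending batched $O(\log n)$-bit confirmation messages, and invoking Lemma~\ref{lem:mis-update} per inserted vertex; the violating-vertex scan over $\vh$ (or $\vh\cup\vmh$) is then a single round of targeted messages by $u_t$, whose cost is charged to the at-least-twice-as-many insertions as required by Invariant~\ref{inv:core}\ref{part:core2}. Combining these pieces with the reduction already carried out in the proof of Theorem~\ref{thm:dynamic-m} (restarting whenever $m$ doubles or halves) completes the proof of Lemma~\ref{lem:distributed-m} and hence Theorem~\ref{thm:distributed-m}.
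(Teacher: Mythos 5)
Your proposal follows essentially the same route as the paper: distribute the data structure of Section~\ref{sec:dynamic-m-ds} across vertices, prove that \UpdateNeighbors and \UpdateTwoHopNeighbors cost $O(1)$ rounds and $O(m^{3/4})$ messages (the content of Lemma~\ref{lem:distributed-m-updates}), designate the deleted vertex $u_t$ as a local coordinator for the insertion cases, and then invoke the same charging argument via the distributed analogue of Invariant~\ref{inv:core}. One small technical slip worth flagging: you claim $u_t$ can ``scan $\Lonehop$ in its own local memory using the greedy MIS rule,'' but the coordinator does not know the adjacency structure among the vertices of $\Lonehop$ (it only stores the counters $\MIStwoHopNei{u_t}[w]$), so the greedy rule cannot be simulated purely locally. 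The paper avoids this by having $u_t$ re-query its neighbors in $\Lonehop$ for their up-to-date $\MISNei{\cdot}$ status after each single insertion and only then selecting the next candidate; the extra $O(m^{3/4})$ messages per re-query are absorbed into the per-inserted-vertex budget. With that fix your argument coincides with the paper's.
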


The algorithm in Lemma~\ref{lem:distributed-m} is a simple implementation of our sequential dynamic algorithm in Lemma~\ref{lem:dynamic-m}. In the following, we first adapt the data structures
introduced in Section~\ref{sec:dynamic-m-ds} to the distributed setting. We then show that with proper adjustments,  the (sequential) update algorithm in Section~\ref{sec:dynamic-m-update} can also be used
in the \CONGEST model and prove Theorem~\ref{thm:distributed-m}.

\subsection{The Data Structure}\label{sec:distributed-m-ds}

We store the same exact information in Section~\ref{sec:dynamic-m-ds} per each vertex here as well and maintain Invariants~\ref{inv:neighbors} and~\ref{inv:2hop-neighbors}.
We first prove that the two
procedures $\UpdateNeighbors$ and $\UpdateTwoHopNeighbors$ can both be implemented in constant rounds and $O(m^{3/4})$ messages in total. In particular,

\begin{lemma}\label{lem:distributed-m-updates}
	For any vertex $u \in V$,
	\begin{enumerate}[label=(\roman*)]
	\item $\UpdateNeighbors(u)$ operation requires spending $1$ round and $m^{3/4}$ messages in total, and
	\item $\UpdateTwoHopNeighbors(u)$ operation requires $2$ rounds and $2\cdot m^{3/4}$ messages in total.
	\end{enumerate}
\end{lemma}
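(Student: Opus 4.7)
The plan is to show that both procedures, as defined in Section~\ref{sec:dynamic-m-ds}, are essentially ``local'' and that the bounds already established in the sequential analysis translate directly into round and message counts in the \CONGEST model. In both procedures the only content of a message is whether $u$ is entering or leaving $\mis$ (for $\UpdateNeighbors$) together with $u$'s identifier (for $\UpdateTwoHopNeighbors$ relays); this fits in $O(\log n)$ bits, so the \CONGEST bandwidth constraint is never a concern.

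For part (i), I would argue as follows. The vertex $u$ is the sole initiator of communication, and using the locally stored $\NeiDeg{u}$ list it can partition $\Nei{u}$ into the four degree classes without any extra communication. If $u \in \vh$, $u$ sends a $O(\log n)$-bit ``flip-status'' message to each $v \in \Nei{u} \setminus \vl$ in one round. If $u \notin \vh$, $u$ sends the same message to every $v \in \Nei{u}$ in one round. In either case recipients update $\MISNei{v}$ locally upon receipt, so a single round suffices. The message count equals the number of recipients, which is at most $|V \setminus \vl|$ in the first case and at most $\Deg{u}+1 \leq 2 m^{3/4}$ in the second case; the former is $O(m^{3/4})$ because every vertex outside $\vl$ contributes at least $m^{1/4}/2$ to the total degree $2m$. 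This matches the sequential analysis in Section~\ref{sec:dynamic-m-ds} and yields the claimed bound.

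For part (ii), I would implement the procedure in two rounds corresponding to the two hops. In round one, $u$ (which lies in $\vml \cup \vl$, so $\Deg{u} \leq 2 m^{1/2}$) sends a $O(\log n)$-bit message containing its identifier and its status change to each $w \in \Nei{u}$; this costs $O(m^{1/2})$ messages. Using $\NeiDeg{u}$, $u$ in fact only needs to address neighbors $w \in \vl$ (since only those $w$ forward in the sequential procedure), but addressing all neighbors is also fine and uses no more messages. In round two, every $w \in \Nei{u} \cap \vl$ that received a message in round one forwards it to each of its own neighbors $v \in \Nei{w}$, who then update $\MIStwoHopNei{v}[w]$ locally. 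Since there are at most $2 m^{1/2}$ such $w$ and each has $\Deg{w} \leq 2 m^{1/4}$, round two costs $O(m^{1/2} \cdot m^{1/4}) = O(m^{3/4})$ messages. Total round count is $2$, total message count is $O(m^{3/4})$, as required.

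There is no serious obstacle here; the only thing to verify carefully is that all partition information and identifiers needed to decide who talks to whom are already stored locally (at $u$ or at the forwarders $w$) in the data structure maintained in Section~\ref{sec:distributed-m-ds}, so no auxiliary communication is required beyond the messages counted above. Given this, the lemma follows by directly translating the sequential iteration structure of the two procedures into synchronous rounds of the \CONGEST model.
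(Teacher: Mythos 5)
Your proof is correct and follows essentially the same two-part argument as the paper: one round and $O(m^{3/4})$ messages for $\UpdateNeighbors$ (using $\card{V\setminus\vl}=O(m^{3/4})$ in the high-degree case and $\deg(u)=O(m^{3/4})$ otherwise), and two rounds with $O(m^{1/2})+O(m^{1/2}\cdot m^{1/4})=O(m^{3/4})$ messages for $\UpdateTwoHopNeighbors$. The only cosmetic difference is that you track the factor-of-two slack from $\Deg{\cdot}$ being a $2$-approximation, whereas the paper states the cleaner bounds $m^{3/4}$ and $2m^{3/4}$; since the lemma is only ever invoked inside $O(\cdot)$ bounds, this does not matter.
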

\begin{proof}
	\emph{Part $(i)$.} If $u \in \vh$, it only needs to send a message to its neighbors in $V \setminus \vl$ and inform them on the status of $u$ (whether it is inserted to or deleted from $\mis$), which requires
	only $1$ round (as they are all neighbors to $u$) and $m^{3/4}$ messages as $\card{V \setminus \vl} \leq m/m^{1/4} = m^{3/4}$. If $u \notin \vh$, it would update all its neighbors again in $1$ round
	and $m^{3/4}$ messages as the latter is an upper bound on number of its neighbors.

	\emph{Part $(ii)$.} If $u \notin \vl \cup \vml$ there is nothing to do. Otherwise, $u$ needs to send a message to all its (at most $m^{1/2}$)
	neighbors that belong to $\vl$ and ask them to relay this information
	to their neighbors. These vertices can then spend another round to inform all  their (at most $m^{1/4}$) neighbors about the status of $u$. This takes $2$ rounds and $m^{1/2} + m^{1/2} \cdot m^{1/4} < 2\cdot m^{3/4}$ messages in total.
\end{proof}

Lemma~\ref{lem:distributed-m-updates} ensures that Invariants~\ref{inv:neighbors} and~\ref{inv:2hop-neighbors} (the only MIS-related information stored for $u$ beside $\MISFlag{u}$ that can be trivially updated) are preserved after
any change in $\mis$ within a constant number of rounds and $O(m^{3/4})$ messages.

In the following, we briefly describe how to update the information stored for vertices per each topology change in the graph.

\paragraph{Vertex Updates.} Let $u$ be the updated vertex. In case of vertex insertion, we simply initialize the data structures at $u$ and we are almost done as the neighbors of $u$ are already informed about $u$ being inserted to the graph
and hence can update their information locally. We only need to send $\Deg{u}$ to all the neighbors (the time needed for this can be charged to the initialization cost of this algorithm).
Now suppose $u$ is being deleted. We can update $\Nei{v}$ and $\NeiDeg{v}$ for any neighbor $v$ of $u$ without any communication as they are informed that $u$ is deleted. We can also run $\UpdateNeighbors(u)$ (virtually) with no communication
as this procedure only informs the neighbors of $u$ that this vertex is being deleted from $\mis$ and by knowing that $u$ has left the graph, any vertex $v$ in the neighborhood of $u$ can update $\MISNei{v}$ accordingly. Finally, we can also
run $\UpdateTwoHopNeighbors(u)$ with only $1$ round of communication and $m^{3/4}$ messages (see Lemma~\ref{lem:distributed-m-updates})
by relaying the information from the neighbors of $u$ (which are informed about $u$ leaving the graph) to their neighbors.

\paragraph{Edge Updates.} These updates are handled exactly the same as in our sequential algorithm. Let $(u,v)$ be the updated edge. Nertices $u$ and $v$ can update all information
except for updating $\MIStwoHopNei{\cdot}$ (in particular $\NeiDeg{\cdot}$ can be updated by the procedure described in Section~\ref{sec:dynamic-m-ds} with $O(1)$ worst-case round complexity and $O(1)$ amortized message complexity).
To do the latter task, vertex $u$ (resp. $v$) can simulate $\UpdateTwoHopNeighbors(v)$ (resp. $\UpdateTwoHopNeighbors(u)$) as described above, which takes
$m^{3/4}$ messages and $1$ round.

We hence showed that after each change in the topology, all the information stored for vertices can be updated in $O(1)$ rounds and $O(m^{3/4})$ amortized messages.

\subsection{The Distributed Algorithm}\label{sec:distributed-m-alg}

We design a distributed algorithm for updating $\mis$ in the network in the spirit of our update algorithm in Section~\ref{sec:dynamic-m-update}.
The algorithm is a simple adaption of our sequential algorithm to this dynamic model.
For every update, we first perform the steps in the previous section to update the information on every vertex in the graph and then make the network stable again by adjusting $\mis$.

Throughout, we aim to maintain the following invariant which is the direct analogue of Invariant~\ref{inv:core} in the dynamic setting.

\begin{invariant}\label{inv:distributed}
	Following every vertex/edge update, the set $\mis$ maintained by the algorithm is an MIS of the input graph. Moreover,
	\begin{enumerate}[label=(\roman*)]
	\item if only a single vertex leaves $\mis$ then there is no restriction on the number of vertices joining $\mis$ (which could be zero).
	\item if at least two vertices leave $\mis$, then at least twice as many vertices are added to $\mis$.
	\end{enumerate}
	In either case, the worst case number of rounds and messages spent by the algorithm for any update is within, respectively, an $O(1)$ and an $O(m^{3/4})$ factor of the total number of vertices leaving and joining $\mis$.
\end{invariant}

Using the same exact argument as in the proof of Lemma~\ref{lem:dynamic-m}, maintaining Invariant~\ref{inv:distributed} ensures that the amortized adjustment complexity and amortized round complexity of the algorithm is $O(1)$
and its amortized message complexity is $O(m^{3/4})$. Hence, to prove Lemma~\ref{lem:distributed-m}, it suffices to prove that Invariant~\ref{inv:distributed} is preserved after every update.
We consider different cases based on insertion and deletion of edges and vertices.

\subsubsection{Edge Deletions}\label{sec:distributed-m-edge-deletions}

Suppose we delete the edge $(u,v)$. We only consider the case that $u$ belongs to $\mis$ and $v$ is not; the remaining cases are either symmetric to this one or need no update in $\mis$ (see Section~\ref{sec:deletions}).
If $v$ is not in $\vl$, by Invariant~\ref{inv:neighbors}, it knows all its neighbors in $\mis$ and can decide whether to join $\mis$ or not to locally; if it enters $\mis$, it can update the network in $O(1)$ rounds and
$O(m^{3/4})$ messages by Lemma~\ref{lem:distributed-m-updates}. If $v$ is in $\vl$, it first sends a message to all its $O(m^{1/4})$ neighbors and ask for their status to which its neighbors reply whether they belong to $\mis$ or not.
This only takes $2$ rounds and $O(m^{1/4})$ communication and then $v$ can decide again whether to join $\mis$ or not to. Note that this part of the result holds even with abrupt deletions.

\subsubsection{Edge Insertions}\label{sec:distributed-m-edge-insertions}

Suppose we insert the edge $(u,v)$. We only consider the case when both $u$ and $v$ belong to $\mis$; the remaining cases need no update in $\mis$ (see Section~\ref{sec:insertions}). Remember that in Section~\ref{sec:insertions}, we needed
to handle these updates in three separate cases. While the algorithm and analysis in each case is different, the procedures needed to carry the information around the network are essentially the same among these cases and hence in the following, for simplicity,
we only consider one of the main cases, namely case \emph{1-b} (see Section~\ref{sec:insertions} for definition of this case). The algorithm in the remaining cases can be adapted to this setting in the same exact way.

Recall that in this case, the vertex $u$ is deleted from $\mis$ and moreover $u$ knows the set $\Lonehop$ entirely, which is of size $O(m^{3/4})$. The general approach is to make $u$ a ``coordinator'' for running the update algorithm
in Section~\ref{sec:insertions} by communicating with its two-hop neighborhood and gather the necessary information to run the sequential update algorithm.

Vertex $u$ first sends a message to all its neighbors in $\Lonehop$
and asks for their status to which they respond whether or not they belong to $\mis$. This takes $2$ rounds and $O(m^{3/4})$ messages.
Next, $u$ informs one of its neighbors that it can join $\mis$ and this new vertex updates its status and the information in the graph which takes $O(1)$
rounds and $O(m^{3/4})$ messages by Lemma~\ref{lem:distributed-m-updates}. After this, $u$ again sends a message to all its neighbors in $\Lonehop$ and asks for their status in $\mis$ to which they respond whether they belong to $\mis$ or whether one of their neighbors in $\Lonehop$ has been added to $\mis$ in this step. Then, again, $u$ informs one of its neighbors (if such exists) that it can join $\mis$, and continues. This way, we only spend $O(1)$ rounds
and $O(m^{3/4})$ communication per each vertex entering $\mis$ in addition to $O(1)$ rounds and $O(m^{3/4})$ communication for communicating with neighbors of $u$ that would not join $\mis$ eventually.

After processing the list $\Lonehop$, we also need to delete from $\mis$, the set of vertices in $\vh$ that are now incident to vertices in $\Lonehop$ that just joined $\mis$. Note that such vertices are necessarily in the two-hop neighborhood
of $u$ and hence $u$ can communicate with them (which are only $O(m^{1/4})$ many) in $O(1)$ rounds and use the above idea to implement the same update algorithm in Section~\ref{sec:insertions} in this model. This allows us to
preserve Invariant~\ref{inv:distributed} by the same exact analysis in Section~\ref{sec:insertions}.

\subsubsection{Vertex Deletions}\label{sec:distributed-m-vertex-deletions}

This case is essentially equivalent to the edge insertion case discussed above. Since we have a graceful deletion, we can treat the deleted vertex the same way as in Section~\ref{sec:distributed-m-edge-insertions} by deleting it from
$\mis$ (if it belonged to it) and using it as the ``coordinator'' to implement the process described in Section~\ref{sec:distributed-m-edge-insertions}.

 \subsubsection{Vertex Insertions}\label{sec:distributed-m-vertex-insertions}

The only thing we need to do in this case is to check whether we need to add this new vertex to $\mis$ or not. If this vertex is not in $\vl$, it already knows this information and hence can decide whether or not to join $\mis$; after
that we are done. Otherwise, if the vertex belongs to $\vl$, it sends a message to all its $O(m^{1/4})$ neighbors and ask for their status in $\mis$, and use that to decide about joining $\mis$. In either case,
we only need $O(1)$ rounds and $O(m^{1/4})$ total communication. After this, we update the neighbors using first part of Lemma~\ref{lem:distributed-m-updates} in $O(m^{3/4})$ communication and $O(1)$ rounds.

\medskip

To conclude, we showed that Invariant~\ref{inv:distributed} is preserved after any edge or vertex insertion or deletion by the distributed algorithm, hence proving Lemma~\ref{lem:distributed-m}. We are now ready to prove
Theorem~\ref{thm:distributed-m}.

\begin{proof}[Proof of Theorem~\ref{thm:distributed-m}]
	The proof is identical to the proof of Theorem~\ref{thm:dynamic-m}. The only difference is that in this distributed setting, we are not able to maintain the exact number of edges in the graph in a distributed fashion across all vertices.
	However, recall that we assumed vertices affected by an update in the topology know the index of this update, i.e., how many updates have happened before this one. Hence, whenever the number of updates reaches $\Omega(m)$,
	any vertex that knows this information sends a message to all its neighbors to terminate the process which would then be broadcast across the whole graph. This takes $O(m)$ rounds and $O(m)$ communication and can be charged
	to the total number of updates, i.e., $\Omega(m)$ in this step. Hence, the vertices can initialize their data structure using the new choice of $m$ and continue the distributed algorithm in Lemma~\ref{lem:distributed-m}.
\end{proof}




\bibliographystyle{abbrv}
\bibliography{randomMMbibfile}

\clearpage
\appendix

\section{An $\Omega(n)$ Lower Bound on Worst-Case Adjustment Complexity}\label{app:worst-case-example}

By a straightforward modification of the lower bound example in~\cite{CHK16}, we can show that adjustment-complexity of any deterministic algorithm is $\Omega(n)$ in the worst-case.

Consider the following example: Let $G_1$ be a complete bipartite graph between two sets of vertices
$L_1$ and $R_1$, each of size $n/4$. We create an identical copy of $G_1$ named $G_2$ with bipartition $L_2,R_2$ on the remaining vertices and let $G$ be the union of these graphs.
Consider any deterministic algorithm $\alg$ for maintaining an MIS on $G$.
Without loss of generality, assume $L_1 \cup L_2$ is the MIS chosen by $\alg$ (in any MIS of $G$ either $L_1$ is entirely in the MIS or $R_1$ and similarly for $L_2$ and $R_2$).
Let $u_1$ and $u_2$ be two arbitrary vertices in $L_1$ and $L_2$. The adversary starts deleting all edges incident to all vertices in $L_1 \setminus \set{u_1}$ and $L_2 \setminus \set{u_2}$.  Finally, it adds an edge between $u_1$ and $u_2$.
We argue that at some point during these updates, $\alg$ adjusted $\Omega(n)$ vertices in the maintained MIS.
There are two cases to consider. For simplicity of exposition, we assume that at each time step, \emph{all} edges incident to a vertex are deleted at once.

Suppose at some point before inserting the last edge,
$\alg$ decides to add a vertex in $R_1$ to the MIS for the first time (the argument is symmetric for $R_2$ as well). Since $u_1$ is incident to all vertices in $R_1$, this means that $u_1$ needs to leave the MIS. Also, since a vertex in
$R_1$ has joined the MIS, we know that there cannot be any edge from vertices in the MIS to \emph{any} vertex in $R_1$ (as we start with a complete bipartite graph and assumed that all edges incident to a vertex are deleted simultaneously). This
means that after this step, \emph{all} vertices in $R_1$ should join the MIS to ensure maximality. Therefore, at this step, $\Omega(n)$ vertices are inserted to the MIS at once, proving the claim in this case.

Now suppose that before inserting the last edge, no vertex in $R_1$ and $R_2$ belong to the MIS and hence both $u_1$ and $u_2$ should be inside it. By adding an edge between $u_1$ and $u_2$, $\alg$ is forced to remove
at least one of them, say $u_1$, from the MIS, which in turn forces all $R_1$ to join MIS to keep the maximality. Hence, again, $\Omega(n)$ vertices are inserted to the MIS at once, finalizing the proof.

\begin{remark}
We remark that this simple example explains why we obtain our results in \emph{amortized} bounds rather than \emph{worst-case} bounds. 
\end{remark}

\end{document}